\newtheorem{theorem}{Theorem}
\newtheorem{lemma}{Lemma}
\newtheorem{corollary}{Corollary}
\newtheorem{example}{Example}
\newtheorem{remark}{Remark}
\newtheorem{assumption}{Assumption}
\newcommand{\E}{\mathbb{E}}
\title{\bf \LARGE  Causal Data Fusion for Panel Data without a Pre-Intervention Period\\~\\} 
\author[1]{Zou Yang}
\author[1]{Seung Hee Lee}
\author[2]{Julia R. Köhler}
\author[3]{AmirEmad Ghassami\thanks{Corresponding author (\texttt{ghassami@bu.edu)}}}
\affil[1]{{\normalsize Faculty of Computing and Data Sciences, Boston University}}
\affil[2]{{\normalsize Division of Infectious Diseases, Boston Children's Hospital \& Harvard Medical School}}
\affil[3]{{\normalsize Department of Mathematics and Statistics, Boston University}}
\date{First Version: October 21, 2024; Current Version: February 6, 2026}
\begin{document}

\maketitle

\begin{abstract}
Traditional panel-data causal inference frameworks, such as difference-in-differences and synthetic control methods, rely on pre-intervention data to estimate counterfactual means. However, such data may be unavailable in real-world settings when interventions are implemented in response to sudden events, such as public health crises or epidemiological shocks. In this paper, we introduce two data-fusion methods for causal inference from panel data in scenarios where pre-intervention data are unavailable. These methods leverage auxiliary reference domains with related panel data to estimate causal effects in the target domain, thereby overcoming the limitations imposed by the absence of pre-intervention data. We demonstrate the efficacy of these methods by deriving bounds on the absolute bias that converge to zero under suitable conditions, as well as through simulations across a variety of panel-data settings. Our proposed methodology renders causal inference feasible in urgent and data-constrained environments where the assumptions of existing causal inference frameworks are not met. As an application of our methodology, we evaluate the effect of a community organization vaccination intervention in Chelsea, Massachusetts on COVID-19 vaccination rates.\\

\noindent
{\bf Keywords:} Causal Inference; Data Fusion; Panel Data; Equi-Confounding; Synthetic Control
\end{abstract}


\section{Introduction}
\label{sec:intro} 

In many real-world applications, a treatment or intervention is applied to a target unit, and the researcher seeks to assess the causal effect of this intervention by evaluating the counterfactual behavior of the target unit in the absence of the intervention. This setting is challenging because, unlike conventional causal inference scenarios, there is only one treated unit, and there may be unobserved variables in the system that confound the treatment–outcome relationship.

The difference-in-differences framework \citep{card1990impact,angrist2008mostly} and its major extension, the synthetic control framework \citep{abadie2003economic,abadie2010synthetic,abadie2015comparative}, are well-known panel-data methods commonly used in such settings. These methods require access to pre-intervention data from the target unit to match it with units (or combinations of units) from the control group. However, in some applications, such pre-intervention data may not be available. For example, in the case of a sudden shock or emergency, immediate action may be taken by agencies, leaving no data from the unit during the emergency before the action was implemented. During the COVID-19 pandemic, interventions were implemented almost instantaneously to curb the spread of the virus. When COVID-19 vaccines first became available in Massachusetts, doses were accessible primarily to English-fluent people who could use online appointment scheduling sites and who could drive to mass vaccination centers. For many low-income residents who lacked access to information in their preferred language, to a computer, or to a car, COVID-19 vaccines remained out of reach. Efforts by a community-based organization in the city of Chelsea became exemplary for improving vaccine access for their community members. As the organization began advocating for vaccine access and mounting a campaign to provide vaccine information for their community members (which is the intervention in the system) as soon as vaccines became available to the general public in January 2021, these circumstances left researchers with no pre-intervention data from the target unit to analyze the impact of the interventions. This lack of pre-intervention data hampers the ability to draw robust conclusions about the effectiveness of the interventions and to inform future policy decisions.

To address this challenge, in this paper, we propose two data-fusion–based causal inference methods for panel data with no pre-intervention period. We consider settings where, in addition to the target domain in which we aim to assess the causal effect, we have access to panel data from the units in an auxiliary reference domain. The reference domain can correspond to time periods before, concurrent with, or after the time periods in the target domain. Moreover, both the observed and unobserved covariates in the reference domain may differ from those in the target domain. Our first method, \emph{equi-confounding data fusion}, extends the ideas of difference-in-differences, and our second method, \emph{synthetic control data fusion}, extends the synthetic control method. As noted above, both the standard difference-in-differences and synthetic control methods require pre-intervention data. Our contribution is to formalize sufficient conditions on the relationship between the data in the reference and target domains that render identification and estimation of the causal effect feasible via integration (fusion) of data from the two domains. These conditions essentially allow us to anchor the counterfactual outcome of the unit of interest in the target domain to that in the reference domain. 

A challenge introduced by having more than one domain is that extra caution is required when matching the target unit to control units, as naive extensions of existing matching methods may lead to unsatisfactory matches in one or both domains. We address this issue by introducing a novel data-driven constrained optimization method for designing matches in our synthetic control data fusion approach. We demonstrate the efficacy of these methods by deriving bounds on the absolute biases and provide conditions under which these bounds converge to zero. 

The use of data fusion for identification and estimation of causal parameters when data from a single domain are insufficient for effective evaluation has gained significant attention in recent years \citep{dahabreh2020extending,degtiar2023review,colnet2024causal}. However, to the best of our knowledge, we are the first to study this approach in the challenging setting of panel data with a single target unit where latent confounding variables may be present.

The rest of the paper is organized as follows. We provide a formal description of the problem in Section~\ref{sec:desc} and lay out the basic assumptions of the setting under study. We present our equi-confounding data fusion and synthetic control data fusion methods in Sections~\ref{sec:eqcon} and \ref{sec:sc}, respectively. We evaluate our approaches on synthetic data in Section~\ref{sec:sims}. We apply our methodologies to evaluate COVID-19 vaccination rates in Section~\ref{sec:application}. Our concluding remarks are provided in Section~\ref{sec:conc}. All proofs are provided in the Supplementary Material.

\section{Problem Description}
\label{sec:desc}

We consider a setting with $J+1$ units observed over $S$ time periods.
For each unit $i$ at time period $s$, let $A_{i,s}\in\{0,1\}$ and $Y_{i,s}\in\mathbb{R}$ denote the treatment and outcome random variables, respectively, and let $X_i\in\mathbb{R}^{d_t}$ denote the observed covariates for $1\le i\le J+1$. We consider the case where unit $i=1$ receives the treatment over all time periods, and the remaining $J$ units are control units, i.e., they never receive treatment. Formally,
\[
A_{i,s}=
\begin{cases}
	1 & \text{if } i=1, \; 1\le s\le S,\\
	0 & \text{if } 2\le i\le J+1, \; 1\le s\le S.
\end{cases}
\]
We refer to unit $i=1$ as the \emph{target unit}.
An example of this setting is the study of the effect of community organizations on the COVID-19 vaccination rate of a sub-population in cities, as mentioned in Section~\ref{sec:intro}. Here, the target city (Chelsea) had a strong community organization with a large team of community health workers and the capacity to provide wraparound services addressing fundamental social determinants of health from the beginning of the vaccine rollout, whereas the remaining cities under consideration did not have such community organizations. 

We adopt the potential outcomes framework \citep{rubin1974estimating} and denote the potential outcome of $Y_{i,s}$, had the treatment variable been set to $A_{i,s}=a$ for all $s$ (possibly contrary to fact), by $Y_{i,s}^{(a)}$, where $a\in\{0,1\}$. Our parameter of interest is the effect of the treatment on the treated (ETT) averaged over the $S$ treatment time periods:
\[
\psi_0=\frac{1}{S}\sum_{s=1}^S \E\!\left[Y_{i,s}^{(1)}-Y_{i,s}^{(0)} \;\middle|\; A_{i,s}=1 \;\text{for all } s\right].
\]
Since unit $i=1$ is the only unit with $A_{i,s}=1$, the parameter $\psi_0$ can be written as
\begin{equation}
\label{eq:ett}
\psi_0=\frac{1}{S}\sum_{s=1}^S \E\!\left[Y_{1,s}^{(1)}-Y_{1,s}^{(0)}\right].
\end{equation}

We assume the consistency assumption \citep{hernan2020causal}, formally stated as follows.
\begin{assumption}[Consistency]
\label{assm:cons}
	For all $1\le i\le J+1$ and $1\le s\le S$, we have $Y_{i,s}=Y_{i,s}^{(a)}$ if $A_{i,s}=a$ for all $s$. 
\end{assumption}

By consistency, $\psi_0$ can be written as $\psi_0=\frac{1}{S}\sum_{s=1}^S \E[Y_{1,s}-Y_{1,s}^{(0)}]$. Hence, the challenge in evaluating the causal effect lies in assessing the potential outcome mean of the target unit under control. Similar settings have been considered in the literature on causal inference within the difference-in-differences framework, the synthetic control framework, and their variations \citep{arkhangelsky2024causal}. However, in those approaches it is assumed that the target unit is also observed before the implementation of the treatment, which is not the case in our setting. Therefore, those approaches cannot be directly applied here.

In our setting, we also have access to data from a secondary domain for the same units. We refer to the primary domain, in which we aim to evaluate the causal parameter, as the \emph{target domain}, and to the secondary domain as the \emph{reference domain}. 
In our application, the target domain consists of the COVID-19 vaccination rates of the Hispanic sub-population in Massachusetts cities, while the reference domain is the vaccination rates of the Black sub-population in those same cities.
In the reference domain, units are observed over $T$ time periods. We denote the observed covariates of the units in this domain by $Z_i\in\mathbb{R}^{d_r}$, $1\le i\le J+1$. Note that the covariates observed in the target and reference domains need not be the same, and $d_r$ is not necessarily equal to $d_t$. We also observe an outcome variable $F_{i,t}\in\mathbb{R}$ for $1\le i\le J+1$ and $1\le t\le T$ in the reference domain. 

Our goal is to formalize conditions regarding the relevance of the outcome variables $F_{i,t}$ in the reference domain and $Y_{i,s}$ in the target domain, under which the observations in the reference domain can be leveraged to evaluate the causal parameter of interest in the target domain. To this end, we propose two data-fusion approaches, termed equi-confounding data fusion and synthetic control data fusion, which are described in Sections~\ref{sec:eqcon} and \ref{sec:sc}, respectively.

\section{Equi-Confounding Data Fusion}
\label{sec:eqcon}

\subsection{Linear Equi-Confounding}

Our first proposed approach extends the ideas of negative controls and the difference-in-differences causal inference framework to our data fusion setup. A similar extension appeared in \citep{ghassami2022combining} for evaluating the long-term causal effect of treatments via data fusion. 

Define
\[
Y^{(0)}_i:=\frac{1}{S}\sum_{s=1}^S Y^{(0)}_{i,s}, \quad 
Y_i:=\frac{1}{S}\sum_{s=1}^S Y_{i,s}, \quad 
F_i:=\frac{1}{T}\sum_{t=1}^T F_{i,t}.
\]
The main assumption of our approach is the following.

\begin{assumption}[Linear equi-confounding]
\label{assm:equi}
The potential outcome variables in the target domain and the outcome variables in the reference domain satisfy
\begin{equation}
    \label{eq:equi}
\E[Y_1^{(0)}]
-\frac{1}{J}\sum_{i=2}^{J+1}\E[Y_i^{(0)}]
= 
\E[F_1]
-\frac{1}{J}\sum_{i=2}^{J+1}\E[F_i].
\end{equation}
\end{assumption}

The left-hand side of Equation~\eqref{eq:equi} involves variables in the target domain, while the right-hand side involves variables in the reference domain. Hence, Assumption~\ref{assm:equi} specifies the connection between the outcomes in the two domains and clarifies how information in the reference domain can be transferred to the target domain. The quantity 
\(
\E[Y_1^{(0)}]-\frac{1}{J}\sum_{i=2}^{J+1}\E[Y_i^{(0)}]
\) 
measures the confounding effect on the potential outcome $Y^{(0)}_1$ (in our application, this could be, e.g., the effect of regional socioeconomic barriers or shared healthcare infrastructure). Therefore, Assumption~\ref{assm:equi} requires that this effect for $Y^{(0)}_1$ and $F_1$ be equal; hence the name \emph{equi-confounding}. The assumption is inspired by the additive equi-confounding bias assumption in the identification approach using negative outcome controls \citep{lipsitch2010negative,tchetgen2014control,sofer2016negative}, and is similar in spirit to the parallel trends assumption in the difference-in-differences framework \citep{card1990impact,angrist2008mostly}.

To gain intuition about Assumption~\ref{assm:equi}, suppose the intervention in the target domain has no causal effect on the outcome variables in the reference domain. For instance, in the vaccine rollout application, this may hold if the outcome variable in the reference domain is the vaccination rate for another sub-population—e.g., the Black sub-population—which, due to language barriers, was not targeted by the community organization. Another example is the influenza vaccination rate for the same sub-population in the case that the community organization did not campaign for influenza vaccines. Then, Assumption~\ref{assm:equi} requires that the discrepancy in the mean potential outcomes in the absence of community organization activities in the target domain mirrors that in the reference domain. For example, the difference in the mean COVID-19 vaccination rate for the Black sub-population between the target unit and the average of the remaining units is the same as the corresponding difference for the Hispanic sub-population if the target unit (i.e., the city of Chelsea) did not receive any influence from the community organization.

Below, we present an example of a model in which Assumption \ref{assm:equi} is satisfied.

\begin{example}
\label{example:struc}
Assumption~\ref{assm:equi} is satisfied if the data are generated from the following structural equations:
\[
Y_{i,s} = \varrho_s + \theta G_i + \varphi^\top X_i + \alpha_{i,s}A_{i,s} + \varepsilon_{i,s}, 
\quad
F_{i,t} = \rho_t + \theta G_i + \phi^\top Z_i + \epsilon_{i,t},
\]
with $X_i \overset{i.i.d.}{\sim} P_X$ and $Z_i \overset{i.i.d.}{\sim} P_Z$, for some distributions $P_X$ and $P_Z$,
where $\varepsilon_{i,s}$ and $\epsilon_{i,t}$ are mean-zero independent noise variables, and $G_i$ is a target-unit indicator variable equal to $1$ if $i=1$ and $0$ otherwise.
\end{example}

Consider the following estimator for the causal parameter $\psi_0$:
\[
\hat\psi^{eq1}=
\left\{Y_1-F_1\right\}
-\frac{1}{J}\sum_{i=2}^{J+1}\left\{Y_i- F_i\right\}.
\]
The following result establishes the identifiability of $\psi_0$ in our setting and shows that $\hat\psi^{eq1}$ is an unbiased estimator.
\begin{theorem}
\label{thm:equi}
Under Assumptions~\ref{assm:cons} and \ref{assm:equi}, the causal parameter $\psi_0$ is identified as
\[
\psi_0=
\left\{\E(Y_1)-\E(F_1)\right\}
-\frac{1}{J}\sum_{i=2}^{J+1}\left\{\E(Y_i)-\E(F_i)\right\}.
\]
\end{theorem}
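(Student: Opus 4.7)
The plan is to chain two simple identification steps: first use consistency to replace potential outcomes by observed outcomes wherever possible, and then use linear equi-confounding to replace the one remaining unobservable quantity $\E[Y_1^{(0)}]$ by a combination of observed means.

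More concretely, I would start from the expression $\psi_0 = \frac{1}{S}\sum_{s=1}^S \E[Y_{1,s}^{(1)} - Y_{1,s}^{(0)}]$ given in \eqref{eq:ett} and push the average inside the expectation to obtain $\psi_0 = \E[Y_1^{(1)}] - \E[Y_1^{(0)}]$, using the definitions of $Y_i$ and $Y_i^{(0)}$. Applying Assumption~\ref{assm:cons} to the target unit (which has $A_{1,s}=1$ for every $s$) immediately gives $Y_{1,s}=Y_{1,s}^{(1)}$ almost surely, hence $\E[Y_1^{(1)}]=\E[Y_1]$. Applying consistency to each control unit $i\ge 2$ (which has $A_{i,s}=0$ for every $s$) gives $Y_{i,s}=Y_{i,s}^{(0)}$, so $\E[Y_i]=\E[Y_i^{(0)}]$ for $i=2,\dots,J+1$. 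At this point the only term in $\psi_0$ that still involves an unobserved potential outcome is $\E[Y_1^{(0)}]$.

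The final step is to solve Assumption~\ref{assm:equi} for $\E[Y_1^{(0)}]$, obtaining
\[
\E[Y_1^{(0)}] = \frac{1}{J}\sum_{i=2}^{J+1}\E[Y_i^{(0)}] + \E[F_1] - \frac{1}{J}\sum_{i=2}^{J+1}\E[F_i],
\]
and then substitute the right-hand side into $\psi_0 = \E[Y_1] - \E[Y_1^{(0)}]$. After replacing $\E[Y_i^{(0)}]$ with $\E[Y_i]$ on the right, a simple rearrangement groups the target-domain and reference-domain means into the claimed form $\{\E(Y_1)-\E(F_1)\} - \frac{1}{J}\sum_{i=2}^{J+1}\{\E(Y_i)-\E(F_i)\}$.

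Because every step is either a definition, a direct invocation of Assumption~\ref{assm:cons}, or a direct algebraic use of \eqref{eq:equi}, I do not anticipate any real obstacle. The only thing worth being careful about is the bookkeeping between potential outcomes and observed outcomes for the treated versus control units, since consistency is used in two different ways (to equate $Y_{1,s}$ with $Y_{1,s}^{(1)}$ and $Y_{i,s}$ with $Y_{i,s}^{(0)}$ for $i\ge 2$); conflating these is the only place a slip could arise.
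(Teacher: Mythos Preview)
Your proposal is correct and follows essentially the same approach as the paper's proof: both use consistency to write $\psi_0=\E[Y_1]-\E[Y_1^{(0)}]$, solve Assumption~\ref{assm:equi} for $\E[Y_1^{(0)}]$, substitute, and then apply consistency once more to the control units to replace $\E[Y_i^{(0)}]$ by $\E[Y_i]$. Your write-up is, if anything, slightly more explicit about the two distinct uses of consistency for the treated versus control units.
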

\begin{corollary} 
Under Assumptions~\ref{assm:cons} and \ref{assm:equi}, $\hat\psi^{eq1}$ is an unbiased estimator of $\psi_0$.
\end{corollary}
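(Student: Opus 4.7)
The plan is to derive the corollary immediately from the identification result in Theorem~\ref{thm:equi}, so the real work is proving that theorem. I will first outline the identification argument and then note that the unbiasedness claim is just a one-line application of linearity of expectation.

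For Theorem~\ref{thm:equi}, I would begin from the ETT definition $\psi_0 = \frac{1}{S}\sum_{s=1}^S \E[Y_{1,s}^{(1)} - Y_{1,s}^{(0)}]$ and rewrite it, using linearity together with the time-averaged shorthand, as $\E[Y_1^{(1)}] - \E[Y_1^{(0)}]$, where $Y_i^{(1)}$ is the obvious analog of the already-defined $Y_i^{(0)}$. Since unit $1$ is treated at every time period and units $2,\ldots,J+1$ are never treated, Assumption~\ref{assm:cons} (consistency) yields $\E[Y_1] = \E[Y_1^{(1)}]$ and $\E[Y_i] = \E[Y_i^{(0)}]$ for $i \geq 2$. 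The only unobservable quantity remaining is $\E[Y_1^{(0)}]$, the counterfactual mean of the treated unit under control, and this is precisely where the reference domain enters. Solving the equi-confounding identity in Assumption~\ref{assm:equi} for $\E[Y_1^{(0)}]$ gives $\E[Y_1^{(0)}] = \frac{1}{J}\sum_{i=2}^{J+1}\E[Y_i^{(0)}] + \E[F_1] - \frac{1}{J}\sum_{i=2}^{J+1}\E[F_i]$, and substituting $\E[Y_i^{(0)}] = \E[Y_i]$ on the right (again by consistency on the controls) produces the identification formula in the theorem statement after a routine rearrangement.

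The corollary is then immediate: since $\hat\psi^{eq1}$ is a linear combination of the random variables $Y_i$ and $F_i$, taking expectations and pushing through the sum produces exactly the right-hand side of Theorem~\ref{thm:equi}, which equals $\psi_0$. I do not foresee a genuine obstacle here — consistency handles the observed outcomes of both the treated unit and the controls, and Assumption~\ref{assm:equi} is engineered precisely to transfer the target-domain confounding bias into reference-domain observables. The only point requiring a bit of bookkeeping is keeping straight which consistency conclusion applies to which unit, since the treated unit yields the treated potential outcome mean whereas each control unit yields the untreated potential outcome mean; conflating the two would break the identification step.
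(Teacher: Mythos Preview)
Your proposal is correct and mirrors the paper's own argument essentially step for step: the paper proves Theorem~\ref{thm:equi} by using consistency to write $\psi_0=\E[Y_1]-\E[Y_1^{(0)}]$, then substitutes the expression for $\E[Y_1^{(0)}]$ obtained from Assumption~\ref{assm:equi}, and finally applies consistency once more on the controls. The corollary is then immediate by linearity of expectation, exactly as you describe.
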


\subsection{Logarithmic Equi-Confounding}
Suppose outcome variables $F_{i,t}$ are of the same nature as $Y_{i,s}$ but measured on a multiplicative scale. In such cases, the outcomes in the reference domain can still be informative about the potential outcomes in the target domain, but transferring information becomes challenging if the multiplicative scale is unknown. For this scenario, we propose a modification of the linear equi-confounding framework that characterizes the relationship of the variables on a multiplicative scale. 

\begin{assumption}[Logarithmic equi-confounding]
\label{assm:logequi}
The potential outcome variables in the target domain and the outcome variables in the reference domain satisfy
\begin{equation}
\label{eq:logequi}
\frac{ \E[Y_1^{(0)}]}{\E\!\left[\sum_{i=2}^{J+1} Y_i^{(0)}\right]}
=
\frac{ \E[F_1]}{\E\!\left[\sum_{i=2}^{J+1} F_i\right]}.
\end{equation}
\end{assumption}

Similar to Equation~\eqref{eq:equi}, in Equation~\eqref{eq:logequi} the left-hand side involves variables in the target domain and the right-hand side involves variables in the reference domain. Hence, Assumption~\ref{assm:logequi} specifies the connection between outcomes in the two domains. This assumption can be viewed as a modification of Assumption~\ref{assm:equi}, in which the change in the mean is replaced by the change in the logarithm of the mean; hence the name \emph{logarithmic equi-confounding}. A similar observation has been made in \citep{wooldridge2023simple}. More specifically, Equation~\eqref{eq:logequi} can be written as
\[
\log \left\{\E(Y_1^{(0)})\right\} - \log \left\{\E\!\left(\sum_{i=2}^{J+1} Y_i^{(0)}\right)\right\} 
= \log \left\{\E (F_1) \right\} - \log \left\{\E\!\left(\sum_{i=2}^{J+1} F_i\right) \right\}.
\]

We have the following identification result.

\begin{theorem}
\label{thm:logequi}
Under Assumptions~\ref{assm:cons} and \ref{assm:logequi}, the causal parameter $\psi_0$ is identified as
\[
\psi_0=
\E[Y_1] - \frac{ \E[F_1]}{\E\!\left[\sum_{i=2}^{J+1} F_i\right]} \E\!\left[\sum_{i=2}^{J+1} Y_i\right].
\]
\end{theorem}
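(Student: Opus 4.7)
The plan is to combine Assumption~\ref{assm:cons} (consistency) with Assumption~\ref{assm:logequi} (logarithmic equi-confounding) through direct algebraic manipulation. The identification formula is ``one equation in one unknown'' once we use consistency to replace the observed outcomes with the appropriate potential outcomes, so no auxiliary lemma is needed.

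First, I would rewrite the estimand in the form used in Theorem~\ref{thm:equi}. By the definition~\eqref{eq:ett} of $\psi_0$, we have $\psi_0 = \E[Y_1^{(1)}] - \E[Y_1^{(0)}]$ (after averaging over $s$ inside $Y_1$). Since unit $i=1$ is treated at every time period, Assumption~\ref{assm:cons} gives $Y_{1,s} = Y_{1,s}^{(1)}$ for all $s$, and thus $\E[Y_1^{(1)}] = \E[Y_1]$. Hence
\[
\psi_0 = \E[Y_1] - \E[Y_1^{(0)}],
\]
so identification reduces to expressing $\E[Y_1^{(0)}]$ in terms of observables.

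Second, I would solve Assumption~\ref{assm:logequi} for the unknown counterfactual mean. Rearranging~\eqref{eq:logequi} yields
\[
\E[Y_1^{(0)}] = \frac{\E[F_1]}{\E\!\left[\sum_{i=2}^{J+1} F_i\right]} \, \E\!\left[\sum_{i=2}^{J+1} Y_i^{(0)}\right].
\]
Now, for each control unit $i \in \{2,\dots,J+1\}$ we have $A_{i,s}=0$ at every $s$, so consistency gives $Y_{i,s} = Y_{i,s}^{(0)}$ and therefore $\E[\sum_{i=2}^{J+1} Y_i^{(0)}] = \E[\sum_{i=2}^{J+1} Y_i]$. Substituting this back into the displayed expression for $\psi_0$ yields the claimed identification formula.

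There is no real obstacle here beyond bookkeeping; the only subtlety worth flagging explicitly is that~\eqref{eq:logequi} presupposes $\E[\sum_{i=2}^{J+1} F_i] \neq 0$, which is implicit in the statement of Assumption~\ref{assm:logequi} and is inherited by the denominator appearing in the final formula.
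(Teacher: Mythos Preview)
Your proposal is correct and follows exactly the approach the paper uses for the analogous Theorem~\ref{thm:equi}: invoke consistency to write $\psi_0=\E[Y_1]-\E[Y_1^{(0)}]$, solve the equi-confounding identity for $\E[Y_1^{(0)}]$, and apply consistency again to replace $\E[\sum_{i=2}^{J+1}Y_i^{(0)}]$ by $\E[\sum_{i=2}^{J+1}Y_i]$. The paper in fact does not spell out a separate proof of Theorem~\ref{thm:logequi} (the section so labeled actually contains the argument for Theorem~\ref{thm:logequibound}), so your write-up is, if anything, more complete than what appears there.
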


Consider the following estimator for the causal parameter $\psi_0$:
\[
\hat\psi^{eq2}=
Y_1 - \frac{F_1}{ \sum_{i=2}^{J+1} F_i} \sum_{i=2}^{J+1} Y_i.
\]
It can be shown that, under the assumptions of Theorem~\ref{thm:logequi}, this estimator is consistent for $\psi_0$. However, in our setting we only have one observation from the target unit, and hence such a consistency result is not relevant. Instead, we aim to bound the bias of the estimator. To this end, we impose the following conditions.

\begin{assumption}
\label{assm:bound}
~\\[-6mm]
\begin{enumerate}
    \item \emph{(Boundedness).} For all $i$, $Y_{i}^{(0)} \in [l_y, L_y]$ and $F_i \in [l_f, L_f]$, with $0<l_y,l_f$.
    \item \emph{(Weak dependence among controls).} There exists a function $\tau$ with $\tau(J)\downarrow 0$ as $J\to\infty$, such that
    \[
    \frac{1}{J^2}\sum_{\substack{i,j=2\\ i\neq j}}^{J+1}\big|\mathrm{cov}(F_i,F_j)\big| \;\le\; \tau(J).
    \]
    
\end{enumerate}
\end{assumption}

Part (2) of Assumption~\ref{assm:bound} rules out a fully connected dependence structure among control units. This assumption is trivially satisfied if the outcomes for the control units in the reference domain are pairwise independent. 

Let $C(J)$, denote an upper bound for the quantity
\[
\left|\mathrm{cov}\!\left(\frac{F_1}{\frac{1}{J}\sum_{i=2}^{J+1} F_i}, \frac{1}{J}\sum_{i=2}^{J+1} Y_i\right)\right|.
\]  
We have the following result regarding bounding the bias of our proposed estimator.

\begin{theorem}
\label{thm:logequibound}
Under Assumptions~\ref{assm:cons}, \ref{assm:logequi}, and \ref{assm:bound}, the absolute bias of the estimator $\hat\psi^{eq2}$ can be bounded as
\[
\big|\E[\hat\psi^{eq2}]-\psi_0\big|
\;\le\;
L_y\!\left(
\frac{\Delta_f^2}{4\,l_f^2}\,\frac{1}{\sqrt{J}}
\;+\;
\frac{\Delta_f}{2\,l_f^2}\sqrt{\tau(J)}
\;+\;
\frac{L_f}{l_f^3}\Big(\frac{\Delta_f^2}{4J}+\tau(J)\Big)
\right)
\;+\; C(J),
\]
where $\Delta_f:=L_f-l_f$. 
\end{theorem}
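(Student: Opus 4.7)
The plan is to subtract the identification formula of Theorem~\ref{thm:logequi} from $\E[\hat\psi^{eq2}]$ and split the resulting bias into a mean-discrepancy term (the gap between ratio-of-means and mean-of-ratio) and a covariance term. Writing $\bar F := J^{-1}\sum_{i=2}^{J+1} F_i$, $\bar Y := J^{-1}\sum_{i=2}^{J+1} Y_i$, $\mu := \E[\bar F]$, and $R := F_1/\bar F$, one has $\hat\psi^{eq2} = Y_1 - R\bar Y$ and $\psi_0 = \E[Y_1] - (\E[F_1]/\mu)\,\E[\bar Y]$ (the factors of $J$ in Theorem~\ref{thm:logequi} cancel). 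Decomposing $\E[R\bar Y] = \E[R]\E[\bar Y] + \mathrm{cov}(R,\bar Y)$ then gives
\[
\E[\hat\psi^{eq2}] - \psi_0 \;=\; \E[\bar Y]\bigl(\E[F_1]/\mu - \E[R]\bigr) \;-\; \mathrm{cov}(R,\bar Y).
\]
The covariance term is at most $C(J)$ by Assumption~\ref{assm:bound}(3), and $|\E[\bar Y]|\le L_y$ by boundedness of $Y^{(0)}$ together with consistency for the control units, so the remaining task is to bound $|\E[F_1]/\mu - \E[R]|$.

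For the mean-discrepancy term I would apply the exact algebraic identity
\[
\frac{1}{\bar F} \;=\; \frac{1}{\mu} \;-\; \frac{\bar F - \mu}{\mu^2} \;+\; \frac{(\bar F - \mu)^2}{\mu^2\,\bar F},
\]
multiply by $F_1$, and take expectations. Since $\E[\bar F - \mu] = 0$, the linear term contributes exactly $-\mathrm{cov}(F_1,\bar F)/\mu^2$, yielding
\[
\E[F_1]/\mu - \E[R] \;=\; \frac{\mathrm{cov}(F_1,\bar F)}{\mu^2} \;-\; \E\!\left[\frac{F_1(\bar F - \mu)^2}{\mu^2\,\bar F}\right].
\]
Using $\mu \ge l_f$, $\bar F \ge l_f$, and $F_1 \le L_f$, the quadratic remainder is bounded in absolute value by $(L_f/l_f^3)\,\mathrm{var}(\bar F)$. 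Expanding $\mathrm{var}(\bar F)$ into diagonal variances plus cross-covariances and invoking Popoviciu's inequality $\mathrm{var}(F_i) \le \Delta_f^2/4$ together with Assumption~\ref{assm:bound}(2) yields $\mathrm{var}(\bar F) \le \Delta_f^2/(4J) + \tau(J)$. The covariance $\mathrm{cov}(F_1,\bar F)$ is then bounded via Cauchy--Schwarz by $\sqrt{\mathrm{var}(F_1)\,\mathrm{var}(\bar F)} \le (\Delta_f/2)\sqrt{\Delta_f^2/(4J) + \tau(J)}$, and the subadditivity $\sqrt{a+b}\le \sqrt{a}+\sqrt{b}$ splits this cleanly into the two pieces $\Delta_f^2/(4\sqrt{J})$ and $(\Delta_f/2)\sqrt{\tau(J)}$ appearing in the stated bound.

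Assembling the pieces (dividing by $\mu^2 \ge l_f^2$, multiplying by $L_y$, and adding $C(J)$) produces exactly the four summands in the claimed inequality, with the dominant $O(J^{-1/2})$ rate driven by the Cauchy--Schwarz estimate on $\mathrm{cov}(F_1,\bar F)$. The main obstacle I would watch for is precisely this covariance: Assumption~\ref{assm:bound}(2) controls only pairwise covariances \emph{within} the control pool ($i,j\ge 2$) and says nothing about $\mathrm{cov}(F_1,F_i)$, so a direct sum-of-covariances bound is not available and the Cauchy--Schwarz detour through $\mathrm{var}(\bar F)$ is essential. Once that detour is in place, the remaining steps are routine consequences of boundedness and the strict positivity of $l_f$.
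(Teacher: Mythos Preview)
Your proof is correct and reaches exactly the stated bound. The overall architecture matches the paper's: split the bias into a covariance piece handled by Assumption~\ref{assm:bound}(3) and a mean-discrepancy piece $\bigl|\E[F_1/\bar F]-\E[F_1]/\E[\bar F]\bigr|$ scaled by $L_y$, then control the latter via Cauchy--Schwarz and the variance bound $\mathrm{var}(\bar F)\le \Delta_f^2/(4J)+\tau(J)$.

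The one genuine difference is how you unpack the mean-discrepancy term. The paper inserts $\E[F_1]\,\E[1/\bar F]$ and treats the two resulting pieces with separate auxiliary lemmas: a Lipschitz-variance lemma giving $\mathrm{var}(1/\bar F)\le l_f^{-4}\mathrm{var}(\bar F)$ (hence $|\mathrm{cov}(F_1,1/\bar F)|\le l_f^{-2}\sqrt{\mathrm{var}(F_1)\mathrm{var}(\bar F)}$), and a second-derivative lemma giving $|\E[1/\bar F]-1/\E[\bar F]|\le l_f^{-3}\mathrm{var}(\bar F)$. You instead apply the exact second-order expansion $1/\bar F = 1/\mu - (\bar F-\mu)/\mu^2 + (\bar F-\mu)^2/(\mu^2\bar F)$ directly, multiply by $F_1$, and read off $\mathrm{cov}(F_1,\bar F)/\mu^2$ plus a remainder bounded by $(L_f/l_f^3)\mathrm{var}(\bar F)$. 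This is more elementary (no lemmas needed) and in fact coincides with the device the paper itself uses later in proving Corollary~\ref{cor2}; both routes land on identical numerical constants. Your remark that Cauchy--Schwarz is essential here because Assumption~\ref{assm:bound}(2) says nothing about $\mathrm{cov}(F_1,F_i)$ is exactly the point --- the paper's Corollary~\ref{cor2} is precisely what one obtains by adding that missing cross-covariance control.
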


Theorem \ref{thm:logequibound} demonstrates that as $J\to\infty$ the bias will be controlled by $C(J)$. Note that often it is expected that $C(J)$ also vanishes as $J\to\infty$.
In this case, the bound on the bias of our proposed estimator vanishes as the number of control units increases.
Note that this property does not require the outcome variables in the target and reference domains to be (approximately) uncorrelated. Instead, it requires that the sample mean $1/J\sum_{i=2}^{J+1} Y_i$ be approximately uncorrelated with the \emph{ratio quantity} $F_1/(1/J\sum_{i=2}^{J+1} F_i)$, which is a weaker requirement, as the ratio can remain approximately unchanged even when the individual variables vary.


The bound in Theorem~\ref{thm:logequibound} can be further improved if we also bound the cross-covariance between the target and control units:

\begin{corollary}
\label{cor2}
Suppose the assumptions of Theorem~\ref{thm:logequibound} hold and, in addition, there exists a function $\tau_1$ with $\tau_1(J)\downarrow 0$ as $J\to\infty$ such that
\[
\frac{1}{J}\sum_{i=2}^{J+1}\big|\mathrm{cov}(F_1,F_i)\big| \;\le\; \tau_1(J).
\]
Then, the absolute bias of the estimator $\hat\psi^{eq2}$ can be bounded as
\[
\big|\E[\hat\psi^{eq2}]-\psi_0\big|
\ \le\
L_y\!\left(
\frac{\tau_1(J)}{l_f^{2}}
+\frac{\Delta_f+L_f}{l_f^{3}}\Big(\frac{\Delta_f^{2}}{4J}+\tau(J)\Big)
\right)
+ C(J).
\]
\end{corollary}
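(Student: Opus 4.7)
The plan is to follow the same route as the proof of Theorem~\ref{thm:logequibound}, but exploit the new cross-covariance hypothesis to control $\mathrm{cov}(F_1,\bar F)$ directly instead of via Cauchy--Schwarz. This swap replaces the $O(J^{-1/2})$ and $\sqrt{\tau(J)}$ contributions by a single $\tau_1(J)/l_f^2$, which is the source of the sharpening.

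Set $\bar F := \tfrac{1}{J}\sum_{i=2}^{J+1} F_i$, $\bar Y := \tfrac{1}{J}\sum_{i=2}^{J+1} Y_i$, and $\mu := \E[\bar F]$. By Theorem~\ref{thm:logequi} and Assumption~\ref{assm:cons} (so that $Y_i=Y_i^{(0)}$ for $i\ge 2$), the bias decomposes as
\[
\E[\hat\psi^{eq2}] - \psi_0 \;=\; \frac{\E[F_1]\,\E[\bar Y]}{\mu} \;-\; \E\!\left[\frac{F_1\,\bar Y}{\bar F}\right].
\]
Writing $\E[(F_1/\bar F)\,\bar Y] = \E[F_1/\bar F]\,\E[\bar Y] + \mathrm{cov}(F_1/\bar F,\,\bar Y)$, Part~(3) of Assumption~\ref{assm:bound} absorbs the covariance into the $C(J)$ term, so it remains to bound $L_y\,\bigl|\E[F_1]/\mu - \E[F_1/\bar F]\bigr|$ using $\E[\bar Y]\le L_y$.

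The algebraic engine is the exact identity $1/\bar F = 1/\mu + (\mu-\bar F)/\mu^2 + (\mu-\bar F)^2/(\mu^2\bar F)$. Multiplying by $F_1$, taking expectations, and using $\E[F_1(\mu-\bar F)] = -\mathrm{cov}(F_1,\bar F)$ yields
\[
\E\!\left[\frac{F_1}{\bar F}\right] - \frac{\E[F_1]}{\mu} \;=\; -\,\frac{\mathrm{cov}(F_1,\bar F)}{\mu^2} \;+\; \E\!\left[\frac{F_1(\mu-\bar F)^2}{\mu^2\,\bar F}\right].
\]
For the linear term, the new hypothesis directly gives $|\mathrm{cov}(F_1,\bar F)| = \bigl|J^{-1}\sum_{i=2}^{J+1} \mathrm{cov}(F_1,F_i)\bigr| \le \tau_1(J)$, producing $\tau_1(J)/l_f^2$ after using $\mu \ge l_f$. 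For the quadratic remainder, I would use $\mu,\bar F\ge l_f$, split $F_1 = \mu + (F_1 - \mu)$ with $|F_1-\mu|\le\Delta_f$ (combined with $\mu\le L_f$) to bound the numerator, and invoke the variance bound $\mathrm{Var}(\bar F) \le \Delta_f^2/(4J) + \tau(J)$ from boundedness and Part~(2) of Assumption~\ref{assm:bound}, exactly as in the proof of Theorem~\ref{thm:logequibound}. Tracking constants yields the stated factor $(\Delta_f + L_f)/l_f^3$ in front of $(\Delta_f^2/(4J)+\tau(J))$.

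The main obstacle here is not technical but organizational: one must identify precisely where the Cauchy--Schwarz step in Theorem~\ref{thm:logequibound} enters and verify that the direct $\tau_1(J)$ control can be inserted at that single place without disturbing the quadratic remainder or the $C(J)$ piece. The payoff is a rate of $\tau_1(J)+\tau(J)+1/J$ (up to $C(J)$), strictly better than the $1/\sqrt{J}+\sqrt{\tau(J)}+\tau(J)$ rate of Theorem~\ref{thm:logequibound} whenever $\tau_1(J)$ decays faster than $1/\sqrt{J}$.
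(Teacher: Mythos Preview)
Your proposal is correct and follows essentially the same route as the paper. Both arguments replace the Cauchy--Schwarz step by the exact second-order expansion $1/\bar F = 1/\mu - (\bar F-\mu)/\mu^2 + (\bar F-\mu)^2/(\mu^2\bar F)$, bound the linear piece via $|\mathrm{cov}(F_1,\bar F)|\le\tau_1(J)$, and control the quadratic remainder through $\mathrm{var}(\bar F)\le \Delta_f^2/(4J)+\tau(J)$; the only cosmetic difference is that the paper applies the identity inside the split $\E[F_1/\bar F]-\E[F_1]/\mu = \mathrm{cov}(F_1,1/\bar F)+\E[F_1]\bigl(\E[1/\bar F]-1/\mu\bigr)$ from Theorem~\ref{thm:logequibound}, whereas you multiply by $F_1$ first and then take expectations, which collapses the same two pieces into your single remainder $\E[F_1(\mu-\bar F)^2/(\mu^2\bar F)]$ and recovers the $(\Delta_f+L_f)/l_f^3$ constant after your $F_1=\mu+(F_1-\mu)$ split.
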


As shown in this section, the equi-confounding data fusion framework allows for simple identification and estimation procedures. However, the assumptions of this framework may rule out certain data-generating processes with time–unit interaction terms. In the next section, we present an alternative data fusion approach that allows for more flexible data-generating processes.

\section{Synthetic Control Data Fusion}
\label{sec:sc}

The equi-confounding assumptions introduced in the previous section require a relation between the average of the outcomes for the control units and the outcomes of the target unit in the reference and target domains. 
In this averaging, all units are weighted equally, with weights $1/J$. In many applications, certain control units are more similar to the target unit and, hence, may contain more information regarding the counterfactual outcomes of the target unit. Therefore, a natural extension of the ideas in the previous section is to consider different weights for different control units. This is the main idea of the synthetic control method \citep{abadie2003economic,abadie2010synthetic,abadie2015comparative}. Our goal is to formalize this idea in our data fusion setting.

We consider weights $w_i$ for the $i$th control unit, $2\le i\le J+1$, and define
$w=[w_2~\cdots~ w_{J+1}]^\top$, such that $w_i \geq 0$ for $i=2,\ldots,J+1$ and
$\sum_{i=2}^{J+1} w_i = 1$.
Let $\bar F_i := [F_{i,1}~\cdots~F_{i,T}]^\top \in \mathbb{R}^T$.

Our goal is to choose weights such that the following \emph{approximate matching} requirements hold:
there exists a constant $c\ge 0$ such that, with probability one,
\begin{align}
\label{eq:match1}
\left\|\bar F_1 - \sum_{i=2}^{J+1} w_i \bar F_i \right\|_2^2 &\le T c^2,\\
\label{eq:match2}
\left\| Z_1 - \sum_{i=2}^{J+1} w_i Z_i \right\|_2^2 &\le d_r c^2,\\
\label{eq:match3}
\left\| X_1 - \sum_{i=2}^{J+1} w_i X_i \right\|_2^2 &\le d_t c^2.
\end{align}
When $c=0$, these conditions reduce to exact matching (perfect fit).

The intuition for imposing \eqref{eq:match1}--\eqref{eq:match3} is as follows.
Suppose the outcomes of the units in the reference and target domains are generated by mechanisms that differ primarily in how they are affected by observed covariates.
In this case, if we match on observed covariates and match outcomes over a long period in the reference domain, then we should also be matching on unobserved covariates that affect outcomes in both domains.
For instance, in the vaccination-rate application, with the Hispanic and Black sub-populations as the target and reference domains, respectively, factors such as overall attitudes toward vaccines, socioeconomic status, and access to vaccines are unobserved covariates that may similarly affect COVID-19 vaccination rates across both sub-populations.
Therefore, if we match on the vaccination rate from one subpopulation for an extended period, as well as on all observed covariates relevant to COVID-19 vaccination, then there is statistical justification that we are also matching on unobserved covariates.

Based on the above argument, a weighted sum of the potential outcomes with weight vector $w$ that
satisfies \eqref{eq:match1}--\eqref{eq:match3} with a small $c$ will match the potential outcome of
the target unit to a good approximation. That is,
\[
Y_1^{(0)} \approx \sum_{i=2}^{J+1} w_i Y_{i}^{(0)}=\sum_{i=2}^{J+1} w_i Y_{i},
\]
where the last equality follows from the consistency assumption.
We refer to the weighted sum on the right-hand side as the \emph{transferred synthetic control}.
Equipped with this quantity, we propose the following estimator for our causal parameter of interest:
\begin{equation}
\label{eq:sc}
\hat{\psi}^{sc} =  Y_{1} - \sum_{i=2}^{J+1} w_i Y_{i}.
\end{equation}

Consider the case of $c=0$. In restrictions \eqref{eq:match1}–\eqref{eq:match3}, we will usually have more equations than unknowns, and, hence, these restrictions cannot be exactly satisfied. Therefore, we instead minimize the mismatch in each restriction. Formally, for $i = 1, \dots, J+1$, let $K_i := [\bar F_{i}^\top~Z_i^\top~X_i^\top]^\top$. A naive way to obtain the weight vector $w$ is
\begin{equation}
\begin{aligned}
\label{eq:optim1}
w =\arg\min_{w} \left\| K_1 - \sum_{i=2}^{J+1} w_i K_i \right\|_2^2,
\end{aligned}
\end{equation}
subject to
$\sum_{i=2}^{J+1} w_i = 1,\quad w_i \geq 0,\ i = 2, \ldots, J+1$.
This is akin to the methodology used in the original synthetic control method. However, this basic approach does not suffice for our data fusion goal.
An important concern with the approach above is that if $T$ is larger than $d_t$, the chosen weight vector $w$ may be mostly influenced by $\bar F_i$, resulting in good matching in the reference domain but poor tracking of $X_i$ in the target domain, and hence poor matching of potential outcomes in the target domain. 
Note that this issue does not arise in the standard synthetic control setting because, in that setting, matching on the outcomes helps matching on the observed covariates and vice versa. In our data fusion setting, matching on observed variables in one domain does not necessarily improve matching in the other domain. 
To quantify the performance of the matchings, for a given $w$, we define the normalized squared errors (NSEs) of the predictors of $\bar F_1$, $Z_1$, and $X_1$ as 
\[
NSE(O_1, w) := \frac{1}{|O_1|}\left\|O_1 - \sum_{i=2}^{J+1} w_i O_i\right\|_2^2, 
\]
where $O_i$ denotes $\bar F_i$, $Z_i$, or $X_i$. 

To resolve the issue above, we add an extra constraint to the optimization problem that forces the matching on the observed covariates to be close to the best achievable matching. We further incorporate a \emph{budgeting} vector $b$ into our objective function to achieve more flexibility in the choice of the weights. Formally, let $b=[b_F~b_Z~b_X]^\top$ be a vector with positive elements that sum to one, determining 
the budgets to spend on minimizing the NSEs for $\bar F_1$, $Z_1$, and $X_1$. 
This vector can be simply set to $b=[1/3~1/3~1/3]^\top$. Alternatively, we can prioritize minimizing the NSE of $\bar F_1$, $Z_1$, or $X_1$—for example, if one is known to be more complex—by increasing its corresponding element in $b$. In general, this can be done in a data-driven fashion using estimated NSEs. We next propose an algorithm that chooses the budget vector $b$ in a data-driven approach and uses it to find an optimal weight vector $w^*$.

Our method for choosing the optimal budget vector $b^*$ and weights $w^*$ is outlined in Algorithm~\ref{data-driven}. This algorithm takes the observed data, as well as two hyperparameters $\eta_Z$ and $\eta_X$, as input. We first calculate baseline weights $w^*_Z$ and $w^*_X$ separately by minimizing $NSE(Z_1, w)$ and $NSE(X_1,w)$, respectively. These baseline weights serve as reference points for constraining the optimization process, as they yield the best possible NSEs for $X_1$ and $Z_1$. 
Next, we define $B$ as the set of all possible vectors $b$ satisfying $b_F, b_Z, b_X \geq 0$ and $b_F + b_Z + b_X= 1$.
For each vector $b\in B$, the algorithm finds $w^*(b)$ by solving optimization problem~\eqref{optim:algo}. This optimization problem aims to find $w$ that minimizes the weighted sum of NSEs. Importantly, we require the solution to be such that the NSEs for $Z_1$ and $X_1$ are not far from the best possible NSEs for these variables (using baseline weights $w^*_Z$ and $w^*_X$). This ensures that the matching for the covariates is not compromised and, hence, resolves the concern mentioned earlier.
Finally, the algorithm picks $b^*$ by minimizing $NSE(\bar F_1, w^*(b))$, and returns $w^*=w^*(b^*)$. Therefore, among all choices with acceptable matches on the covariates, we pick the one with the best match on the outcomes in the reference domain. This choice promotes a good match on unobserved components involved in the outcome-generating mechanism.

\begin{algorithm}[t]
\textbf{Input:} 
$(\bar F_i, Z_i, X_i)_{i=1}^{J+1}$, and hyperparameters $\eta_Z$, $\eta_X$. 

\textbf{Define:} 
$B = \{b = (b_F, b_Z, b_X) : b_F, b_Z, b_X \geq 0,\ b_F + b_Z + b_X = 1 \}$.

\textbf{Calculate:}\\ 
$w^*_Z = \arg \min_w NSE(Z_1, w)$ s.t. $\sum_i w_i = 1,\ 0 \leq w_i$; \\
$w^*_X = \arg \min_w NSE(X_1, w)$ s.t. $\sum_i w_i = 1,\ 0 \leq w_i$.

\For{$b \in B$}{

    \hspace{5mm}Solve:
    \begin{equation}
    \label{optim:algo}
    \begin{aligned}
        w^*(b) &= \arg \min_w\ \ b_F \cdot NSE(\bar F_1, w) + b_Z \cdot NSE(Z_1, w) + b_X \cdot NSE(X_1, w)  \\
            \text{s.t.}\ &\ \ \frac{1+NSE(Z_1, w)}{1+NSE(Z_1, w^*_Z)} \leq 1+ \eta_Z,\quad 
            \frac{1 + NSE(X_1, w)}{1 + NSE(X_1, w^*_X)} \leq 1 + \eta_X, \\
            &\ \ \sum_i w_i = 1,\quad 0 \leq w_i.
    \end{aligned}
    \end{equation}
}

\textbf{Set }$b^* = \arg \min_{b \in B} NSE(\bar F_1, w^*(b))$.

\textbf{Return }$w^*=w^*(b^*)$. 

\caption{Synthetic control data fusion algorithm}
\label{data-driven}
\end{algorithm}

\begin{remark}
In the approach above, in line with the original synthetic control method, we require the weights $w$ to be nonnegative and to sum to one. The purpose of this restriction is interpretability and avoiding overfitting. Yet, similar to the standard synthetic control method, we can also prevent overfitting by regularization \citep{doudchenko2016balancing}.
\end{remark}

As mentioned earlier, our proposed method for approximating the counterfactual outcome of the target unit in the target domain is expected to perform well if the outcomes in the reference and target domains are generated by mechanisms that differ primarily in how they are affected by the observed covariates. We formalize this intuition by considering the following structural equations for the outcome variables in the reference and target domains to study the effect of unobserved confounders on the bias of the estimator.

\begin{assumption}
\label{assm:factormodel}
The outcome variables in the reference and target domains are generated from the following factor models. For all $1\le i\le J+1$, $1\le s\le S$, and $1\le t\le T$,
\begin{equation}
\label{eq:factor}
\begin{aligned}
    Y_{i,s} &= \varrho_s + \varphi_s^\top X_i + \vartheta_s^\top\mu_i 
   + \tilde\vartheta_s^{\top} \tilde\mu_i^Y
    + \alpha_{i,s}A_{i,s} + \varepsilon_{i,s},\\
    F_{i,t} &= \rho_t + \phi_t^\top Z_i + \theta_t^\top\mu_i + \tilde\theta_t^{\top} \tilde\mu_i^F+ \epsilon_{i,t}.
\end{aligned}    
\end{equation}
Here, $\alpha_{i,s}$ is the causal effect on unit $i$ in time period $s$. The vectors $\mu_i\in\mathbb{R}^{d_u}$ are unit-specific latent factors that affect both domains; $\tilde\mu_i^Y\in\mathbb{R}^{d^Y_{\tilde u}}$ and $\tilde\mu_i^F\in\mathbb{R}^{d^F_{\tilde u}}$
are unit-specific latent factors that affect only the target and reference domains, respectively.
Loadings are deterministically bounded in $\ell_2$-norm:
$\sup_{t\le T}\|\theta_t\|_2\le \bar\theta$,
$\sup_{t\le T}\|\tilde\theta_t\|_2\le \bar{\tilde\theta}$,
$\sup_{s\le S}\|\vartheta_s\|_2\le \bar\vartheta$,
and $\sup_{s\le S}\|\tilde\vartheta_s\|_2\le \bar{\tilde\vartheta}$.
We also assume the observed-covariate loadings are deterministically bounded:
$\sup_{s\le S}\|\varphi_s\|_2\le \bar\varphi$ and $\sup_{t\le T}\|\phi_t\|_2\le \bar\phi$.
The matrix $\sum_{t=1}^{T}\theta_t\theta_t^\top$ has minimum eigenvalue $\lambda_{\min}\ge T\,\underline\xi>0$.
The shocks satisfy $\mathbb{E}[\epsilon_{i,t}]=\mathbb{E}[\varepsilon_{i,s}]=0$, and $\epsilon_{i,t}$ are i.i.d.\ sub-Gaussian with variance proxy $\bar\sigma^2$.
The collections $\{\tilde\mu^Y_i\}_{i\le J+1}$ and $\{\tilde\mu^F_i\}_{i\le J+1}$ are i.i.d., mean-zero, sub-Gaussian with variance proxy $\tau^2$.
Let $\mathcal{G}:=\sigma\!\big(\{X_i,Z_i,\mu_i,\tilde\mu^F_i\}_{i\le J+1},\{\epsilon_{i,t}\}_{i\le J+1,\,1\le t\le T}\big)$ denote the $\sigma$-field generated by all reference-domain objects used to construct the synthetic-control weights $w=(w_2,\ldots,w_{J+1})$, which are assumed $\mathcal{G}$-measurable, nonnegative, and satisfy $\sum_{i=2}^{J+1} w_i=1$.
We assume that $\{\varepsilon_{i,s}\}_{i,s}$ and $\{\tilde\mu^Y_i\}_i$ are independent of $\mathcal{G}$.
\end{assumption}

In Assumption~\ref{assm:factormodel} we allow for three groups of unobserved unit-specific factors. The first group, $\mu_i$, is shared across the two factor models, i.e., across the two domains. For example, in the vaccination-rate application mentioned earlier, $\mu_i$ may encode the overall attitude and beliefs regarding vaccination for the $i$th individual. The other two groups are additional unobserved unit-specific factors: $\tilde\mu_i^Y$ in the target domain and $\tilde\mu_i^F$ in the reference domain.
Under the factor model \eqref{eq:factor}, the causal parameter of interest can be written as
\[
\psi_0 \;=\; \frac{1}{S}\sum_{s=1}^{S}\alpha_{1,s}.
\]

We have the following result regarding the bias of our proposed estimator.

\begin{theorem}
\label{thm:syn}
Suppose Conditions \eqref{eq:match1}--\eqref{eq:match3}, Assumption~\ref{assm:cons},
and Assumption~\ref{assm:factormodel} hold. Then
\[
\big|\mathbb{E}[\hat\psi^{sc}]-\psi_0\big|
\;\le\;
\Bigg(
\bar{\varphi}\sqrt{d_t}
\;+\;
\frac{\bar\vartheta\,\bar\theta}{\underline\xi}
\;+\;
\frac{\bar\vartheta\,\bar\theta\,\bar{\phi}}{\underline\xi}\sqrt{d_r}
\Bigg)c
\;+\;
\sqrt{2}\,\frac{\bar\vartheta\,\bar\theta}{\underline\xi}\,\bar\sigma
\sqrt{\frac{\log(2J)}{T}}
\;+\;
2\,\frac{\bar\vartheta\,\bar\theta\,\bar{\tilde\theta}}{\underline\xi}\,\tau\,\sqrt{\log(2J)}.
\]
In particular, when $c=0$ (exact matching), the first term vanishes.
\end{theorem}

Theorem~\ref{thm:syn} shows that, in the absence of reference-domain–only latent factors ($\tilde\mu^F_i$), the bound on the bias of our proposed estimator vanishes as the number of time periods in the reference domain increases.
By contrast, the presence of target-domain–only latent factors ($\tilde\mu^Y_i$) does not prevent convergence.
We also see that the bound becomes wider as the number of donor units increases, due to the higher chance of overfitting, i.e., matching by chance.
A main ingredient in the proof is a maximal inequality to bound a linear combination of weighted error terms. This step also appears in \citep{abadie2010synthetic} in the standard synthetic control setting. However, in contrast to that work, we employ a sub-Gaussian approach (see Assumption~\ref{assm:factormodel}), leading to a clearer and more interpretable derivation without additional moment assumptions on auxiliary variables used in \citep{abadie2010synthetic}.
There are also challenges unique to the data fusion setting that do not arise in the standard synthetic control setting:
(i) we must handle target-domain–only and reference-domain–only latent factors; and
(ii) we must account for two sets of factor loadings for the shared latent factors $\mu_i$ across the two domains, which necessitates bounds for both sets. As seen in Theorem~\ref{thm:syn}, the upper bounds on both sets of loadings appear in the bias bound.

In the next section, we evaluate the performance of our proposed estimator in a simulation study and compare it with the estimators introduced in Section~\ref{sec:eqcon}.

\section{Simulation Studies}
\label{sec:sims}

In this section, we evaluate the performance of our proposed estimation methods. We consider a setting with 31 units ($J=30$) and observe $d_r = 3$ and $d_t = 3$ covariates in the reference and target domains, respectively. 
We also consider three latent confounders (i.e., $d_u = 3$) that affect the outcomes in both domains. We observe $T\in\{10,20,\ldots,100\}$ time periods in the reference domain and $S=5$ time periods in the target domain. Data are generated from the factor model in Display~\eqref{eq:factor}. 

\begin{figure}[t]
    \centering
    \subfigure[Evaluation of linear equi-confounding assumption on one DGP.]{%
        \includegraphics[width=0.45\textwidth]{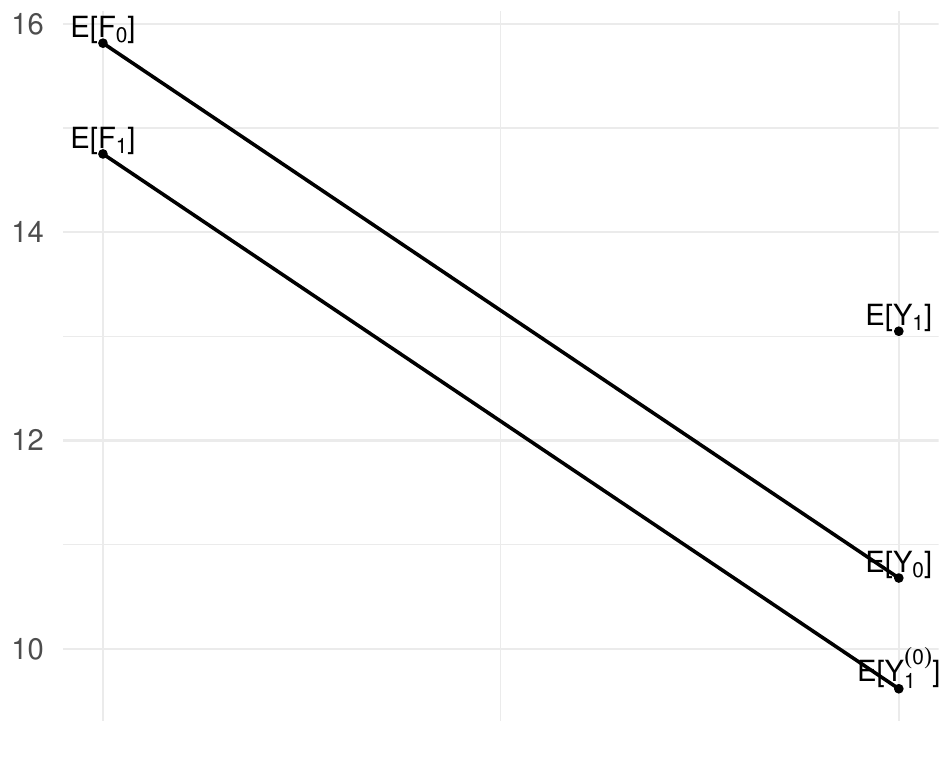}
        \label{fig:equi_linear}
    }
    \hfill
    \subfigure[Evaluation of logarithmic equi-confounding assumption on one DGP.]{%
        \includegraphics[width=0.45\textwidth]{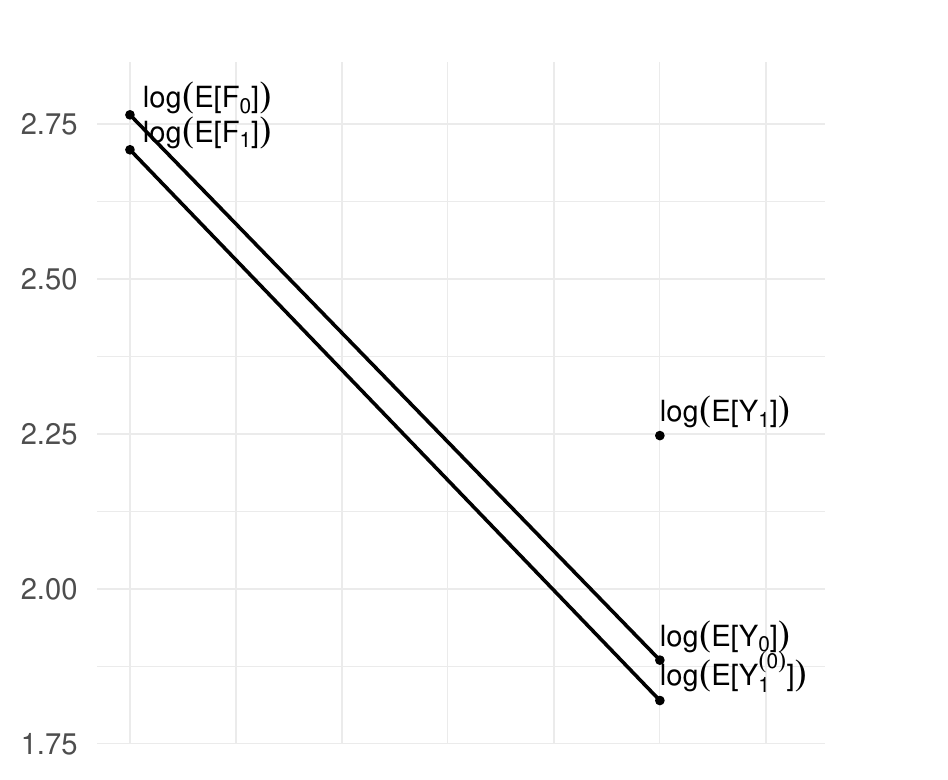}
        \label{fig:equi_log}
    }
    \caption{Visual demonstration of the equi-confounding assumptions.}
    \label{fig:plot_eq}
\end{figure} 

For all $i$, $s$, and $t$, the entries of $X_i$, $Z_i$, and $\mu_i$ are generated i.i.d.\ from $\operatorname{Unif}(0,1)$; the entries of $\phi_t$, $\theta_t$, $\varphi_s$, and $\vartheta_s$ are generated i.i.d.\ from $\operatorname{Unif}(0,10)$. The intercept $\rho_t$ is generated i.i.d.\ from $\operatorname{Unif}(0,20)$ and sorted in increasing order; the same procedure is used to generate $\varrho_s$ but with $\operatorname{Unif}(0,10)$. The noise terms $\epsilon_{i,t}$ and $\varepsilon_{i,s}$ are generated i.i.d.\ from the Gaussian distributions $\mathcal{N}(0,2)$ and $\mathcal{N}(0,0.5)$, respectively. Finally, the parameters $\alpha_{1,s}$ corresponding to the causal effect on the target unit are generated i.i.d.\ from $\operatorname{Unif}(2,5)$ and sorted in increasing order. 
Fixing the factors in the data-generating process (DGP) as above, we generate $M=300$ datasets from the DGP to evaluate the methods. To assess the effect of increasing $T$ from $T=T_1$ to $T=T_2>T_1$, the first $T_1$ factors in the latter are kept identical to those in the former. For an estimator $\hat{\psi}_T$ operating over $T$ reference periods, we estimate the bias using $M$ datasets as
\(
\frac{1}{M} \sum_{m=1}^{M} \hat{\psi}_{T}(D_m) - \psi_0
\),
where $D_m$ denotes the $m$-th dataset.

For the linear equi-confounding approach, let $F_0 = \frac{1}{J}\sum_{i=2}^{J+1} F_i$ and $Y_0 = \frac{1}{J}\sum_{i=2}^{J+1} Y_i$. By choosing the parameters $\varphi_s$ and $\phi_t$ to be $0$ for all $s$ and $t$, and requiring $\frac{1}{T} \sum_{t=1}^T\theta_t$ to be equal to $\frac{1}{S} \sum_{s=1}^S\vartheta_s$, Assumption~\ref{assm:equi} is satisfied: as shown in Figure~\ref{fig:equi_linear} (which is based on a realization of the DGP with $T=20$), the line connecting $\E[F_1]$ to $\E[Y_1^{(0)}]$ is parallel to the line connecting $\E[F_0]$ to $\E[Y_0]$. By choosing the parameters $\varphi_s$ and $\phi_t$ to be $0$ for all $s$ and $t$, and requiring $(\frac{1}{T} \sum_{t=1}^T\rho_t)/(\frac{1}{S} \sum_{s=1}^S\varrho_s)$ to be equal to $(\frac{1}{T} \sum_{t=1}^T\theta_t)/(\frac{1}{S} \sum_{s=1}^S\vartheta_s)$, Assumption~\ref{assm:logequi} is satisfied: as shown in Figure~\ref{fig:equi_log} (which is based on the realization of the DGP with $T=20$), the lines connecting $\log\{\E[F_1]\}$ to $\log\{\E[Y_1^{(0)}]\}$ and $\log\{\E[F_0]\}$ to $\log\{\E[Y_0]\}$ are parallel.


Note that Assumptions~\ref{assm:equi} and \ref{assm:logequi} need not hold for the factor model in Equation~\eqref{eq:factor} in general. To assess the robustness of the equi-confounding estimators, we evaluated the degree to which Assumptions \ref{assm:equi} and \ref{assm:logequi} are violated under the factor model specified in Assumption \ref{assm:factormodel} with the aforementioned choices of the distributions. We considered 1000 DGPs, holding $\mu_i$, $Z_i$, $X_i$ fixed while varying time-specific parameters $\rho_t$, $\phi_t$, $\theta_t$, $\varrho_s$, $\varphi_s$, $\vartheta_s$ and errors across DGPs. The empirical distribution of linear equi-confounding violations exhibits a mean of 0.871 (SD = 1.100), while logarithmic equi-confounding violations show substantially smaller deviations with a mean of 0.014 (SD = 0.017). Both distributions are centered near zero, indicating that the assumptions hold approximately on average. However, the logarithmic equi-confounding assumption demonstrates markedly tighter concentration around zero, suggesting it is substantially less sensitive to the particular realization of the DGP parameters.

Turning to the synthetic control method, note that the optimization problem in Equation~\eqref{optim:algo} has a quadratic objective with quadratic constraints. We solve it using the \texttt{gurobi} package \citep{gurobi} in \textsf{R}.
A key aspect of our methodology is the trade-off between the matching quality in the reference and target domains.
This balance is governed by the flexibility allowed via the parameters $\eta_Z$ and $\eta_X$, which control the relative importance of matching in the reference and target domains, respectively. A larger $\eta_Z$ allows better matching in the target domain at the expense of the reference domain, whereas a larger $\eta_X$ allows better alignment in the reference domain but may reduce the quality of matching in the target domain. This trade-off can be tuned based on domain-specific considerations regarding the association between covariates and outcomes.
In our implementation, we set both $\eta_Z$ and $\eta_X$ to $0.1$. We also conducted a sensitivity analysis to examine the dependence of the estimated causal effect on $\eta_Z$ and $\eta_X$.
As seen in the Figure~\ref{fig:sensitivity_sim}, deviations of $\eta_Z$ and $\eta_X$ from $0.1$--either increasing or decreasing--exhibit minimal impact on the estimated causal effects. Specifically, they lead to changes in the estimated causal effects that are bounded by $0.005$.

\begin{figure}[t]
    \centering
    \subfigure[Sensitivity analysis for the simulation study.]{%
        \includegraphics[width=0.40\textwidth, height=0.23\textheight]{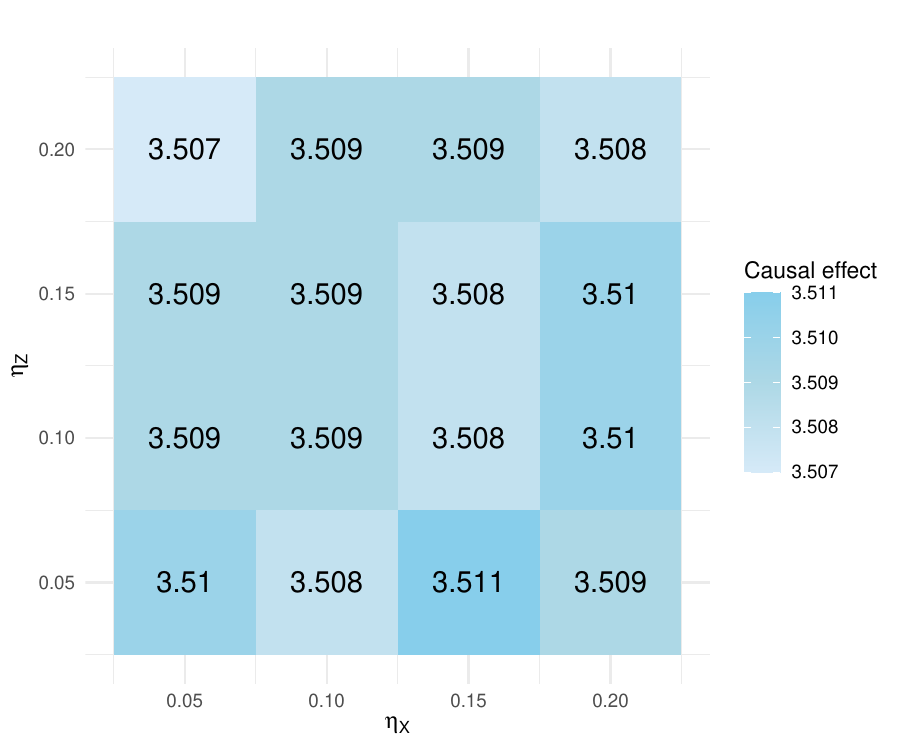}
        \label{fig:sensitivity_sim}
    }
    \hfill
    \subfigure[Sensitivity analysis for the real-data application.]{%
        \includegraphics[width=0.40\textwidth, height=0.23\textheight]{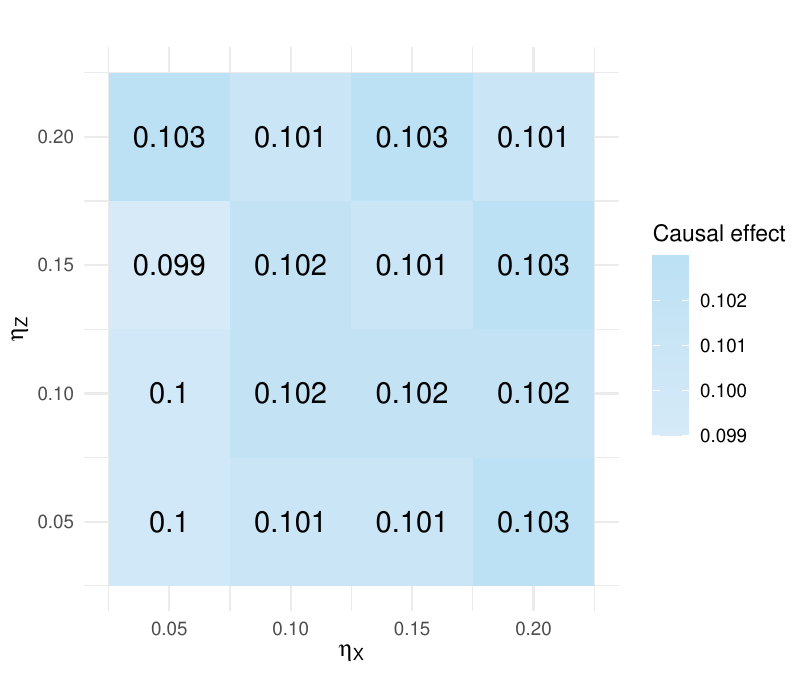}
        \label{fig:sensitivity_real_data}
    }
    \caption{Sensitivity analysis for the simulation study and the real-data application.}
    \label{fig:sensitivity_sim_real_data}
\end{figure}

\begin{figure}[t]
    \centering
    \subfigure[Units in the reference domain and the synthetic control target unit.]{%
        \includegraphics[width=0.45\textwidth, height=0.25\textheight]{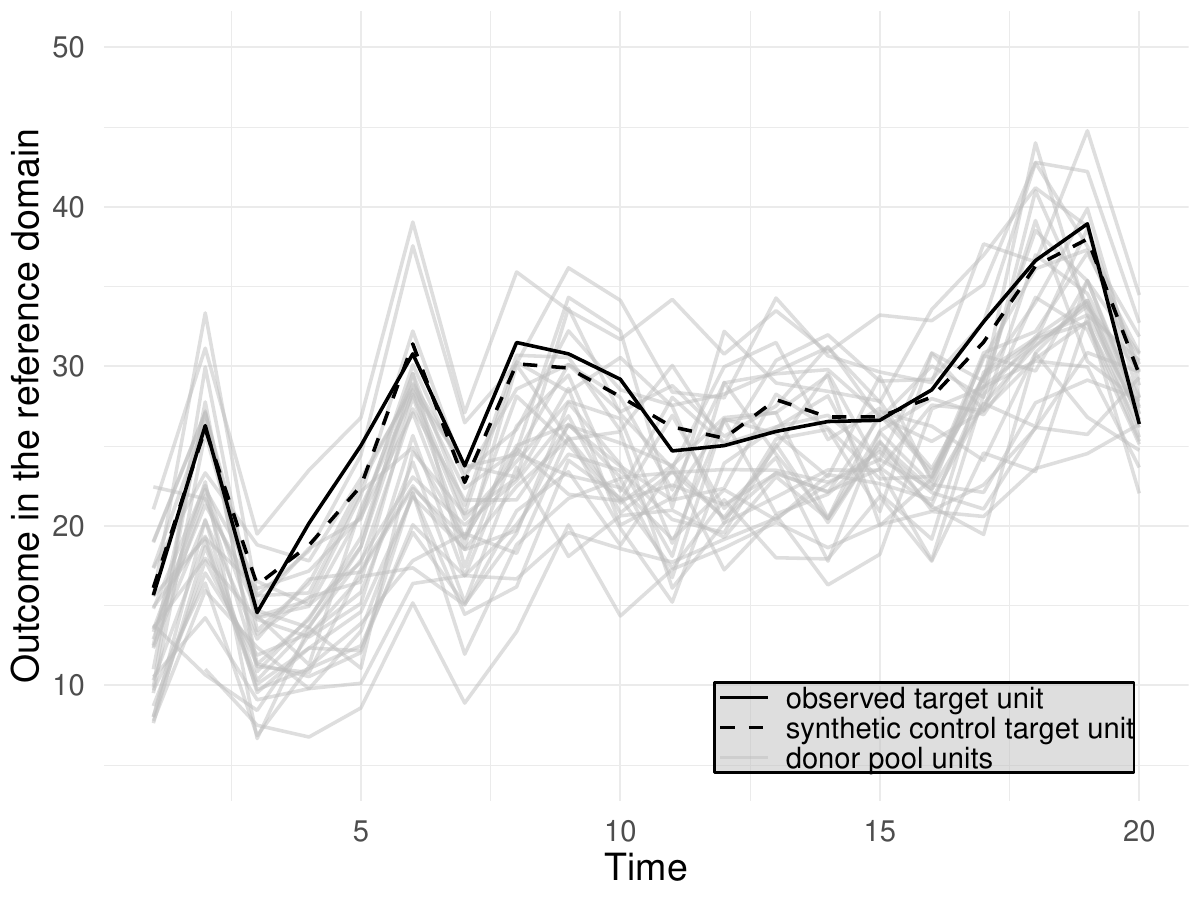}
        \label{fig:F_plot_sc}
    }
    \hfill
    \subfigure[Units in the target domain, the synthetic control target unit, and the counterfactual target unit.]{%
        \includegraphics[width=0.45\textwidth, height=0.25\textheight]{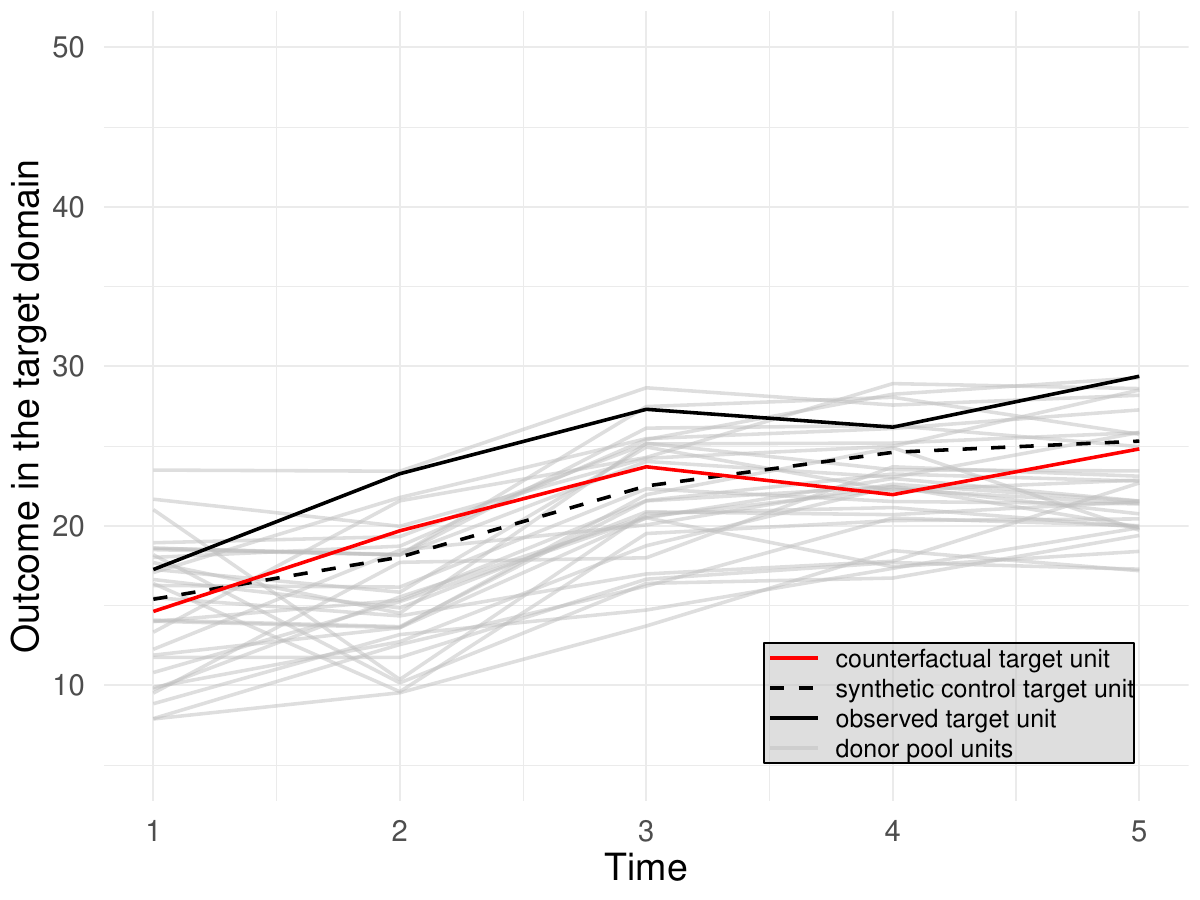}
        \label{fig:Y_plot_sc}
    }
    \caption{Factual, counterfactual, and estimated units under the synthetic control approach in the reference and target domains.}
    \label{fig:plot_sc}
\end{figure}

We draw a single dataset with $T = 20$. 
In Figure~\ref{fig:F_plot_sc}, the black solid line shows the trajectory of the observed target unit and the black dashed line shows the trajectory of 
$\sum_{i=2}^{J+1} w^*_{i} F_{i,t}$ for $t \in \{1,\ldots,T\}$. The gray solid lines represent the observed control units for $i \in \{2,\ldots,J+1\}$. We observe that the synthetic control target unit closely tracks the observed target unit in the reference domain. In Figure~\ref{fig:Y_plot_sc}, the black and gray solid lines represent the paths of the observed units for $i \in \{1,\ldots,J+1\}$ and $s \in \{1,\ldots,S\}$, and the solid red line represents the counterfactual target unit in the target domain. The dashed line shows the synthetic control target unit $\sum_{i=2}^{J+1} w^*_{i} Y_{i,s}$ using the same weights $w^*$ as above. Figure~\ref{fig:Y_plot_sc} demonstrates that the synthetic control data fusion method provides a good approximation to $Y_{1,s}^{(0)}$ using weights $w^*$ learned from the reference domain.

To demonstrate matching performance under our proposed method, we compute the NSEs for latent confounders, covariates, and outcomes in the reference domain.
Importantly, the NSE for $\bar F_1$ starts lower and then increases slightly before stabilizing. This is because, with shorter time periods, overfitting can occur and the algorithm may match noise; with longer time periods, overfitting is mitigated. Note that the NSE does not converge to zero, since the outcome variable contains noise at every time period; one should not expect an NSE smaller than the noise variance.
On the other hand, the NSE decreases for $Z_1$ because, as the time period length increases, better matching on $\bar F_1$ also improves matching on $Z_1$.
This indicates that the estimated matches become more accurate with more data points. Although the optimization does not directly target a good match for $\mu_1$, the optimized weights $w^*$ yield a good match between $\mu_1$ and $\sum_{i=2}^{J+1} w^*_{i} \mu_{i}$, as seen in Figure~\ref{fig:nse_mu_Z_X}. 

\begin{figure}[t]
    \centering
    \subfigure[NSEs of $\mu_i$, $Z_i$, $X_i$, and $\bar F_i$.]{%
        \includegraphics[width=0.31\textwidth, height=0.2\textheight]{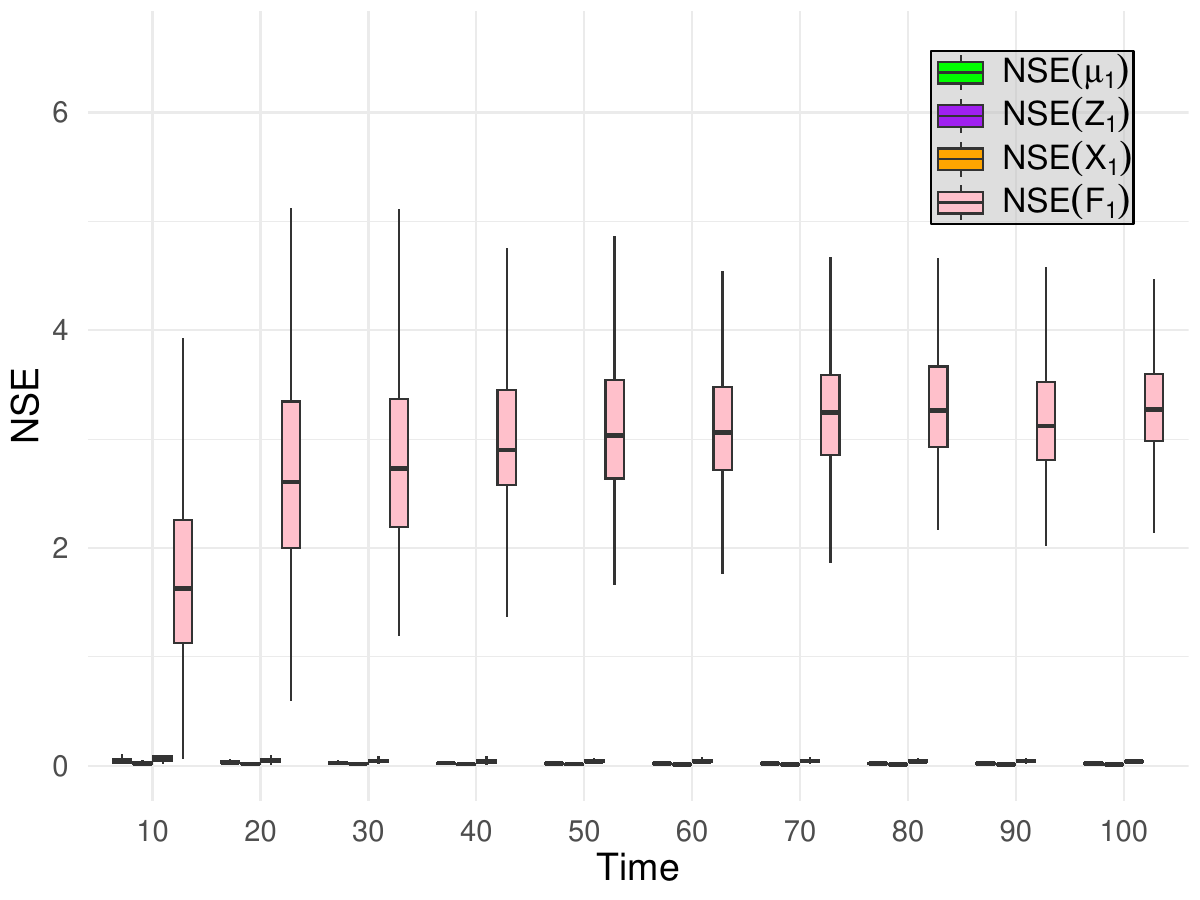}
        \label{fig:nse_mu_Z_X_F}
    }
    \hfill
    \subfigure[NSEs of $\mu_i$, $Z_i$, and $X_i$.]{%
        \includegraphics[width=0.31\textwidth, height=0.2\textheight]{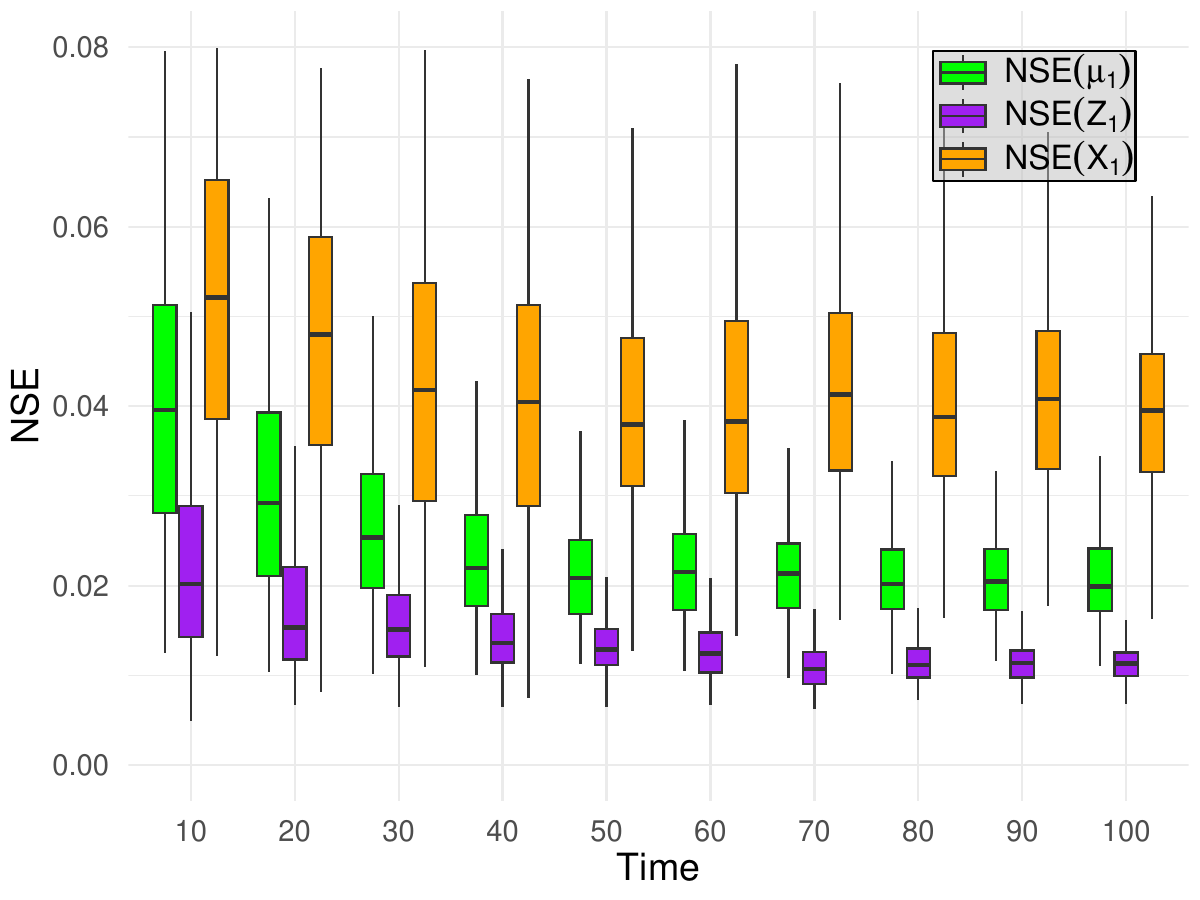}
        \label{fig:nse_mu_Z_X}
    }
    \hfill
    \subfigure[Differences of $\hat{\psi}^{{eq}1}$, $\hat{\psi}^{{eq}2}$, and $\hat{\psi}^{{sc}}$ from $\psi_0$.]{%
        \includegraphics[width=0.31\textwidth, height=0.2\textheight]{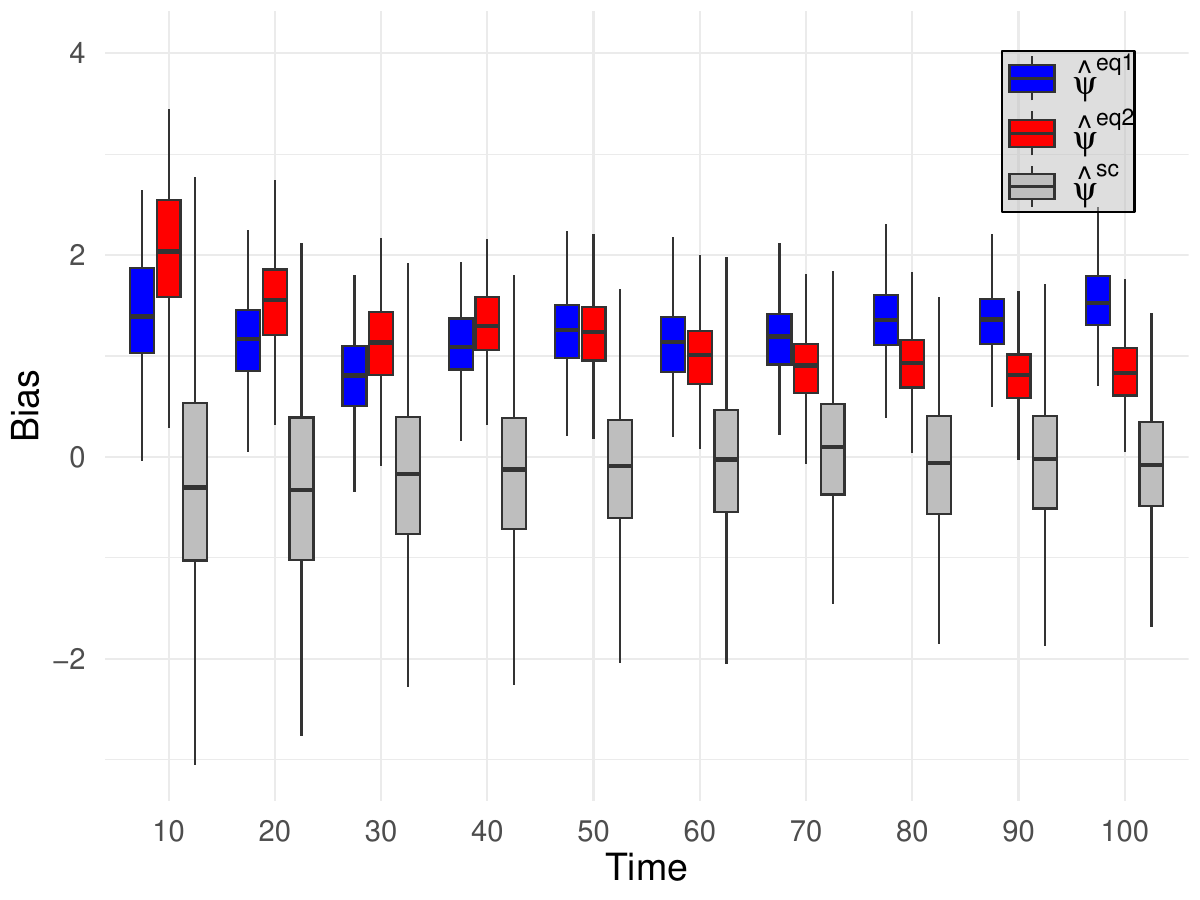}
        \label{fig:bias_sc_eq}
    }
    \caption{Bias and NSEs in synthetic control and equi-confounding methods (over 300 simulations) with a DGP satisfying Assumption \ref{assm:factormodel}.}
    \label{fig:NSEs_bias}
\end{figure}

\begin{figure}[t]
    \centering
    \subfigure[NSEs of $\mu_i$, $Z_i$, $X_i$, and $\bar F_i$.]{%
        \includegraphics[width=0.31\textwidth, height=0.2\textheight]{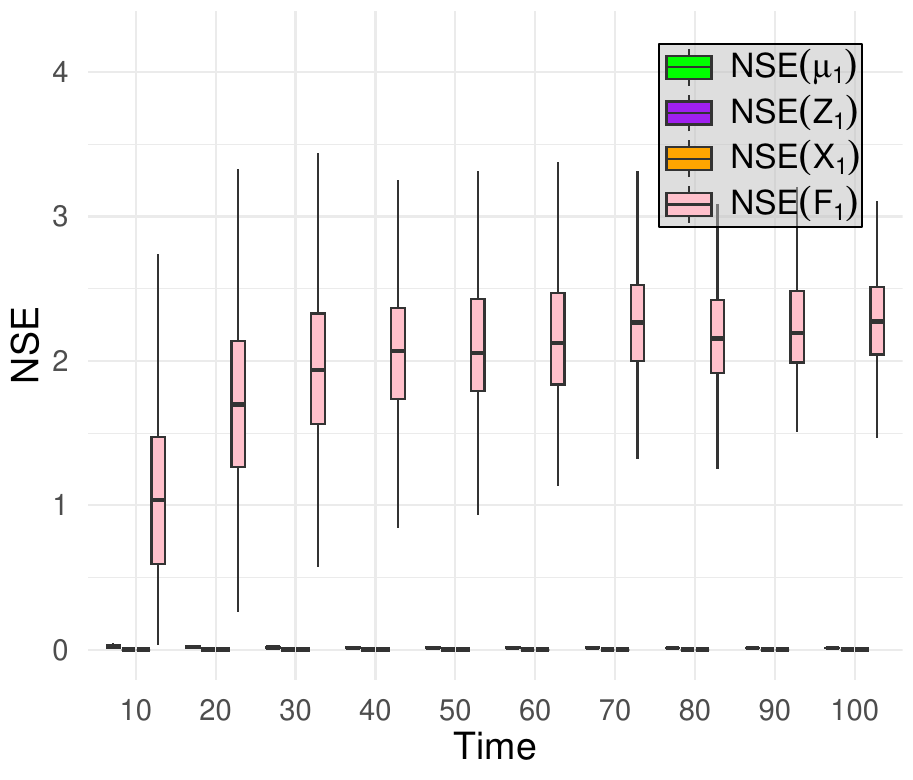}
        \label{fig:nse_mu_Z_X_F_300sim}
    }
    \hfill
    \subfigure[NSEs of $\mu_i$, $Z_i$, and $X_i$.]{%
        \includegraphics[width=0.31\textwidth, height=0.2\textheight]{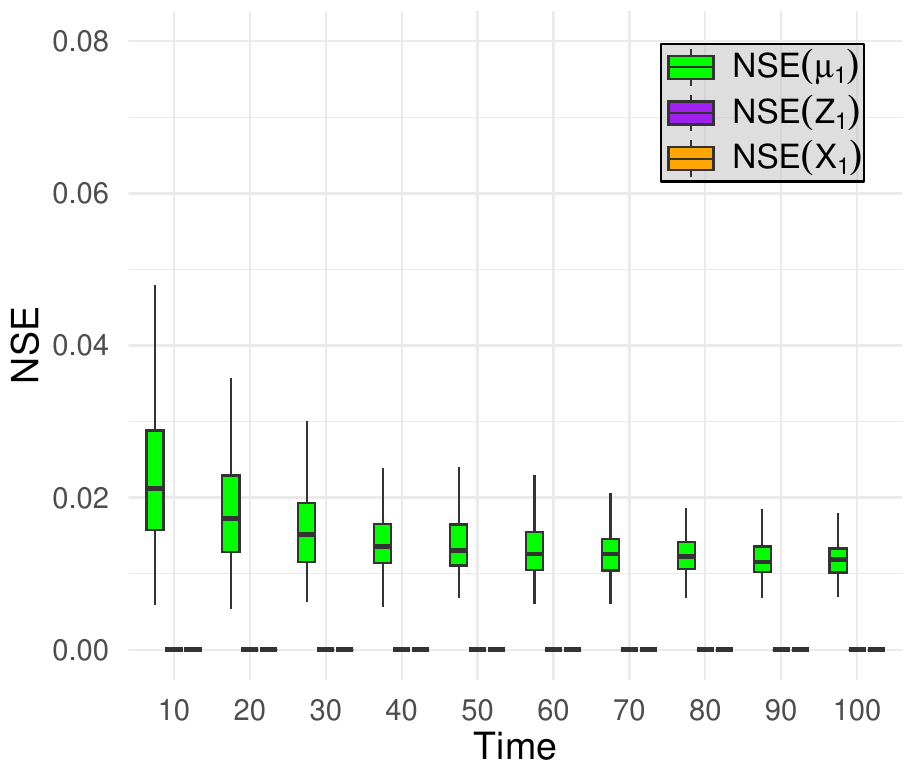}
        \label{fig:nse_mu_Z_X_300sim}
    }
    \hfill
    \subfigure[Differences of $\hat{\psi}^{{eq}1}$, $\hat{\psi}^{{eq}2}$, and $\hat{\psi}^{{sc}}$ from $\psi_0$.]{%
        \includegraphics[width=0.31\textwidth, height=0.2\textheight]{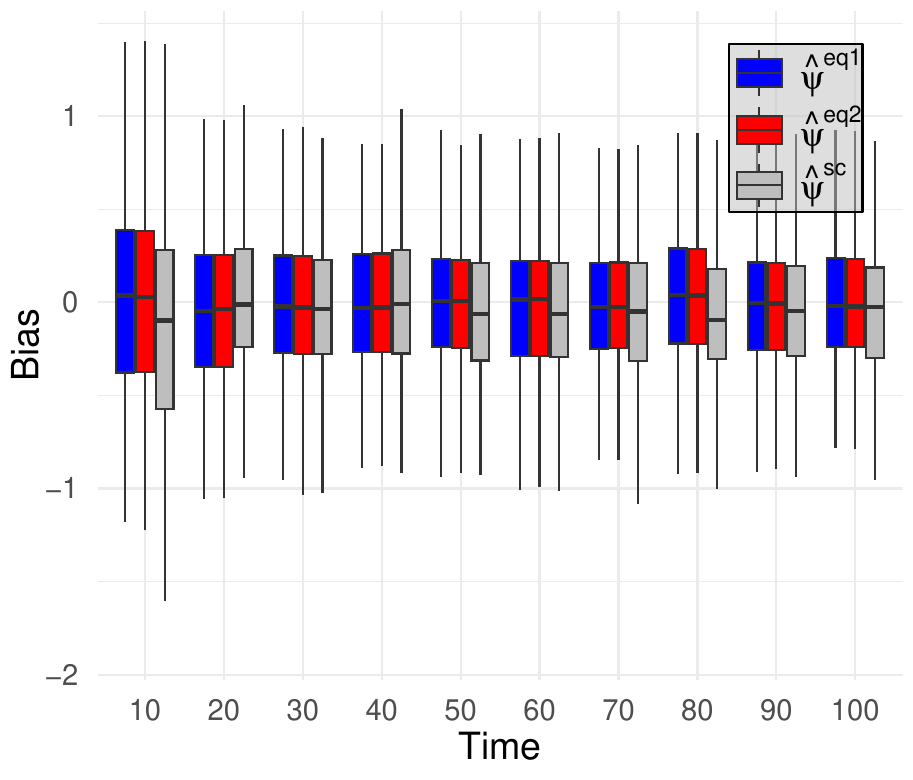}
        \label{fig:bias_sc_eq_300sim}
    }
    \caption{Bias and NSEs for synthetic control and equi-confounding methods (over 300 simulations) with a DGP satisfying the requirements of all frameworks.}
    \label{fig:NSEs_bias_300sim}
\end{figure}

Figure~\ref{fig:bias_sc_eq} provides a comparative visualization of the biases from the synthetic control data fusion and equi-confounding data fusion methods. We present the biases averaged over $300$ datasets for each value of $T$, along with the interquartile range. 
Overall, we observe superior performance from the synthetic control data fusion method.
The bias from the synthetic control data fusion method decreases as $T$ increases, eventually approaching zero, reaffirming Theorem~\ref{thm:syn} that the bias vanishes as the number of reference-domain periods grows. The logarithmic equi-confounding method also exhibits a decreasing bias trend. However, there is no clear decreasing pattern for the bias from the linear equi-confounding method. 
This is due to the fact that the main assumption of this method is not necessarily satisfied.
To examine this point, in Figure~\ref{fig:NSEs_bias_300sim}, we show performances of three proposed methods where the DGPs satisfy the requirements of all three methods. 

\begin{figure}[t]
    \centering
    \subfigure[Placebo test for the simulation study for the reference domain.]{%
        \includegraphics[width=0.45\textwidth, height=0.25\textheight]{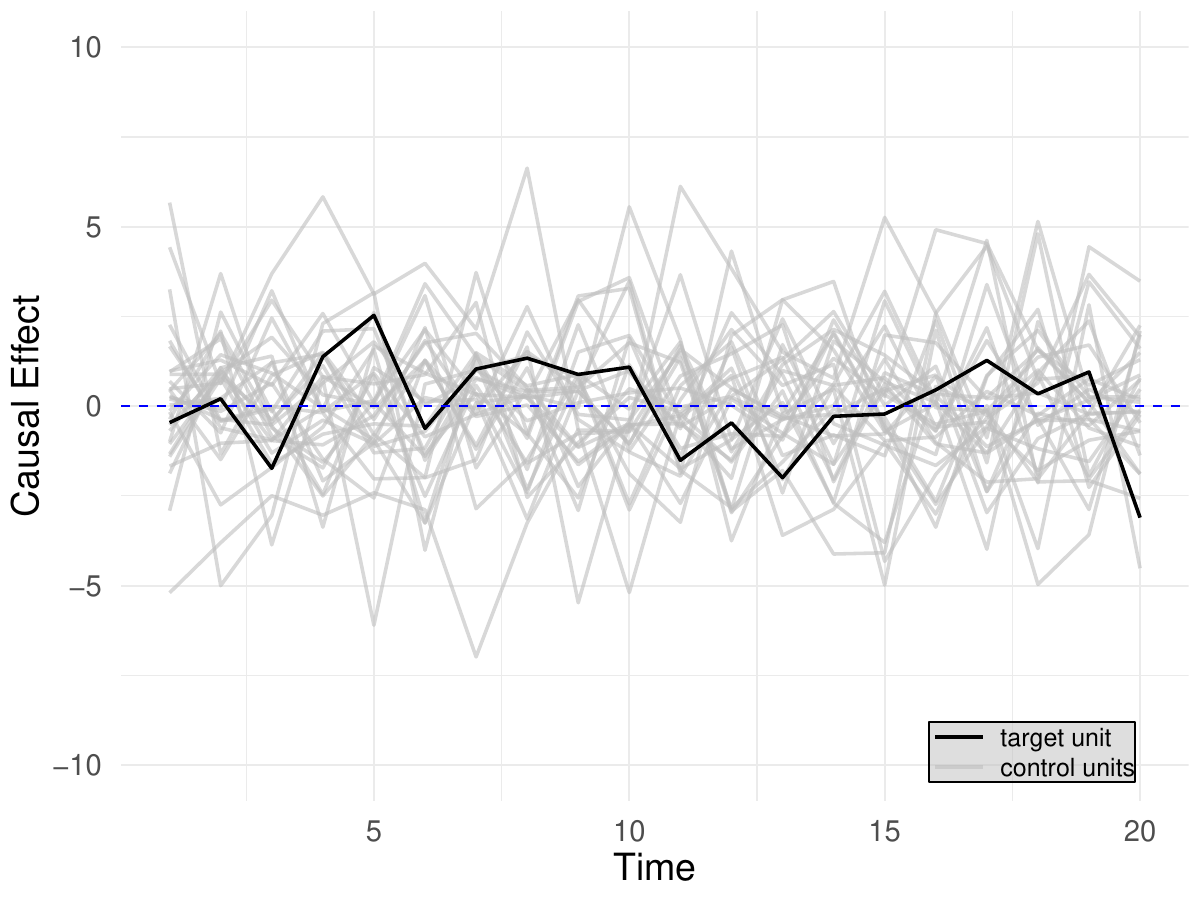}
        \label{fig:placeboF}
    }
    \hfill
    \subfigure[Placebo test for the simulation study for the target domain.]{%
        \includegraphics[width=0.45\textwidth, height=0.25\textheight]{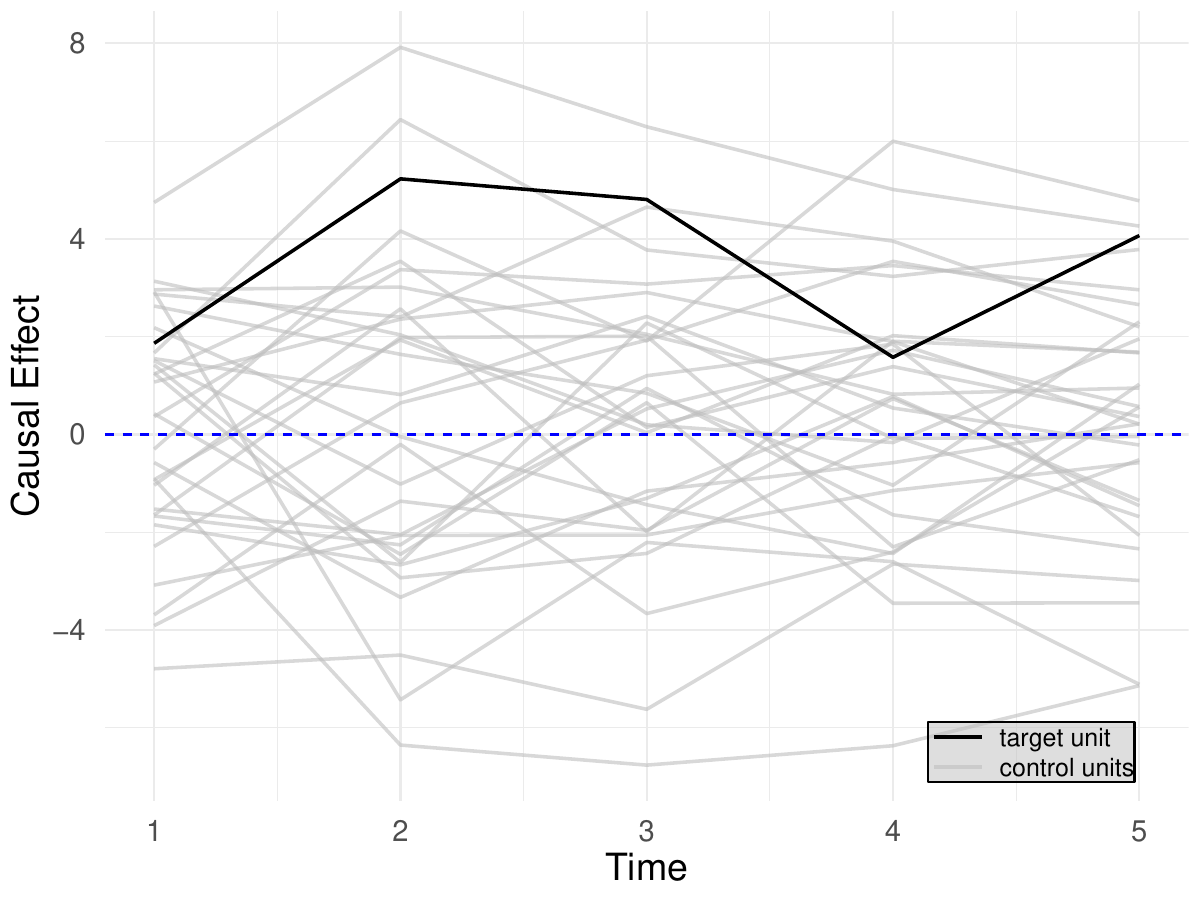}
        \label{fig:placeboT}
    }
    \subfigure[Placebo test for the COVID-19 vaccination rates application for Black sub-population (reference domain).]{%
        \includegraphics[width=0.45\textwidth, height=0.25\textheight]{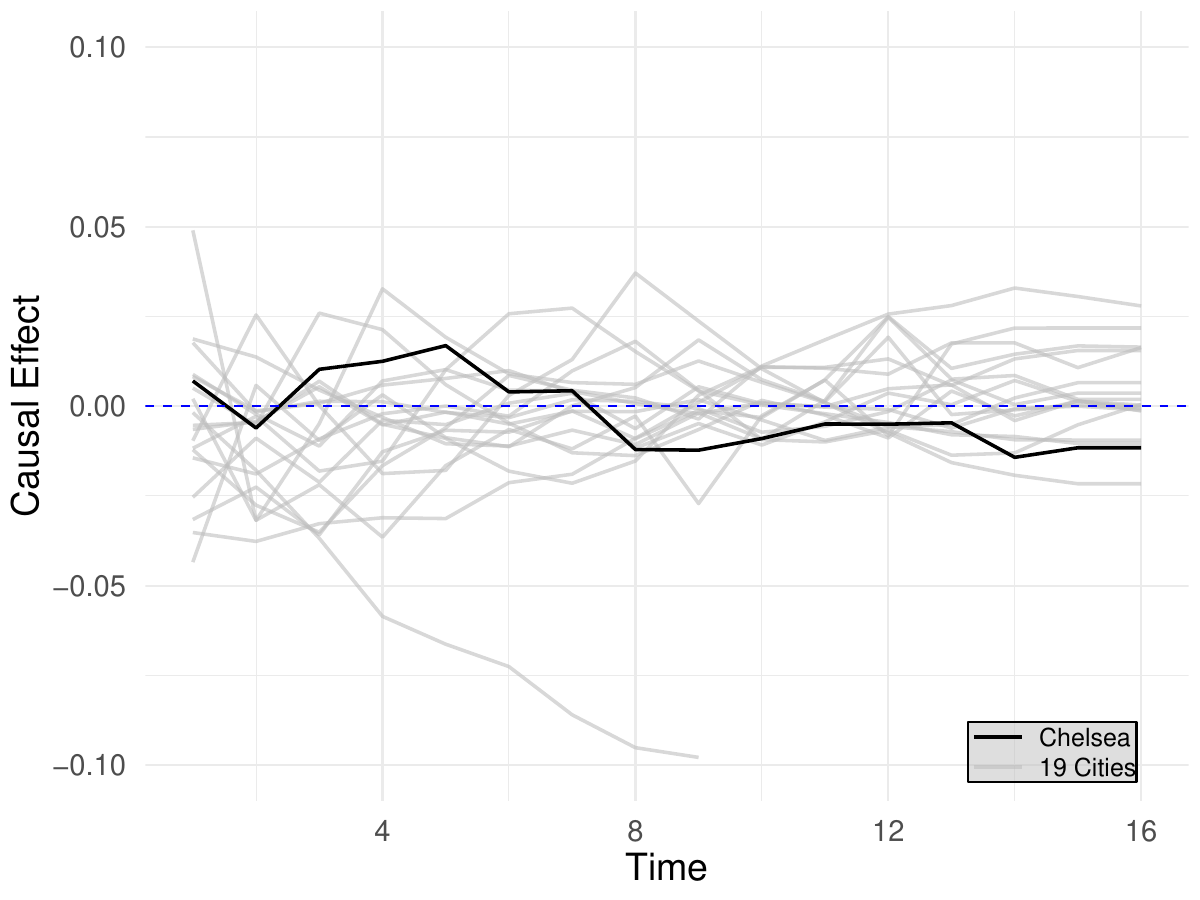}
        \label{fig:placeboF_data}
    }
    \hfill
    \subfigure[Placebo test for the COVID-19 vaccination rates application for Hispanic sub-population (target domain).]{%
        \includegraphics[width=0.45\textwidth, height=0.25\textheight]{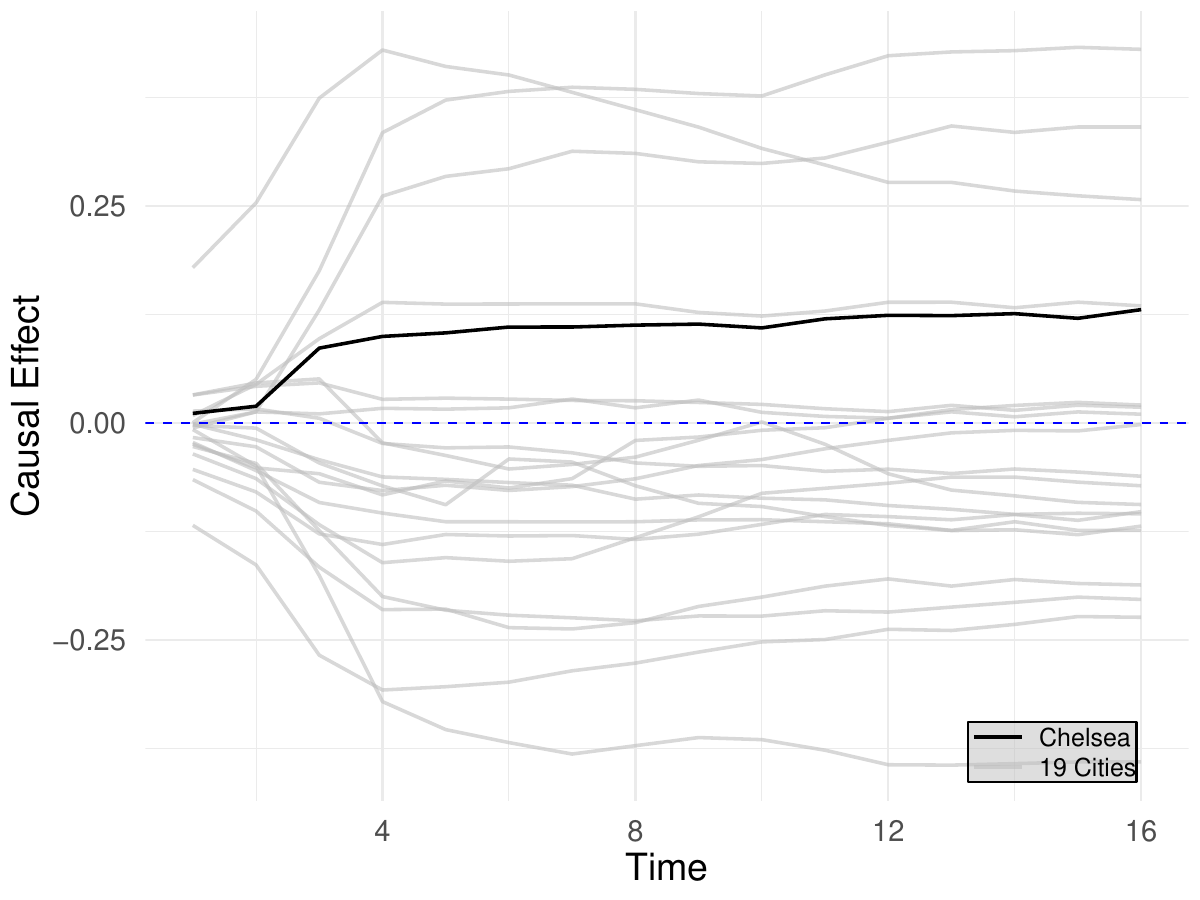}
        \label{fig:placeboT_data}
    }
    \caption{Placebo tests.}
    \label{fig:placebo}
\end{figure}

\begin{figure}[t]
    \centering
    \subfigure[Leave-one-out sensitivity test (reference domain).]{%
        \includegraphics[width=0.45\textwidth, height=0.25\textheight]{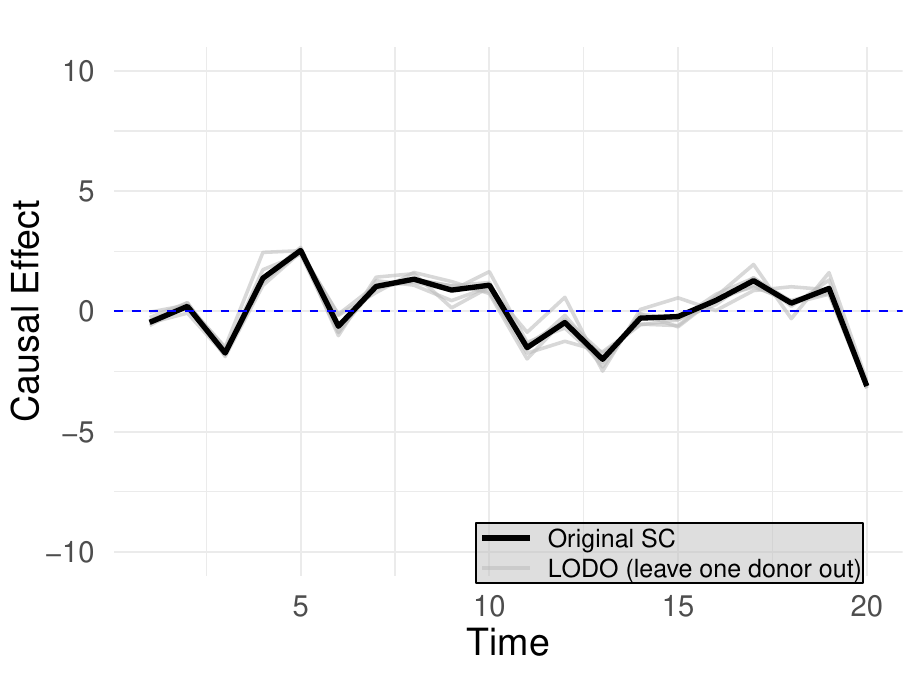}
        \label{fig:placeboF_LODO}
    }
    \hfill
    \subfigure[Leave-one-out sensitivity test (target domain).]{%
        \includegraphics[width=0.45\textwidth, height=0.25\textheight]{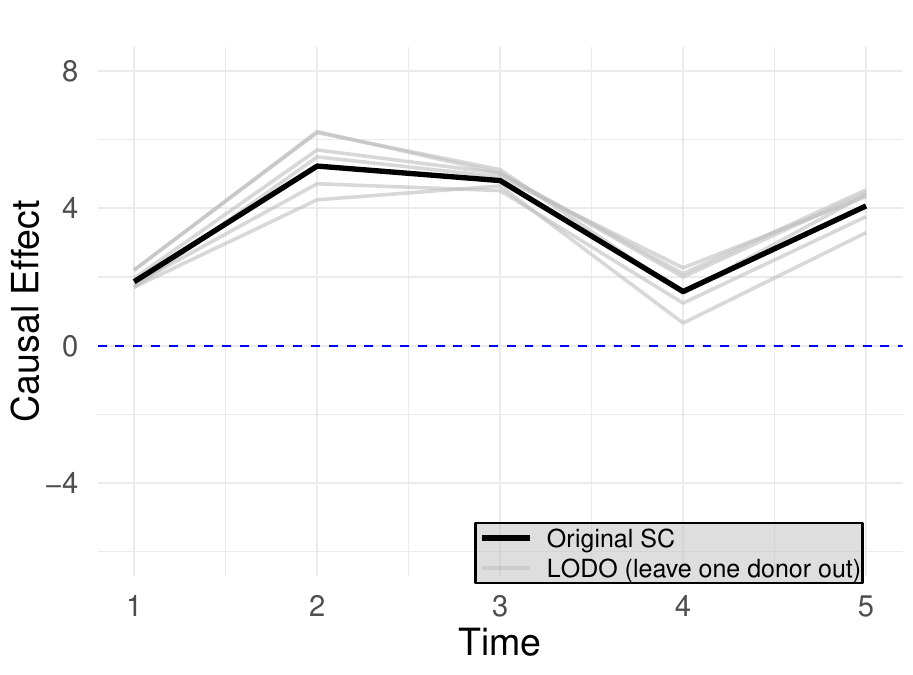}
        \label{fig:placeboY_LODO}
    }
    \subfigure[Leave-one-out sensitivity test for the COVID-19 vaccination rates application for Black sub-population (reference domain).]{%
        \includegraphics[width=0.45\textwidth, height=0.25\textheight]{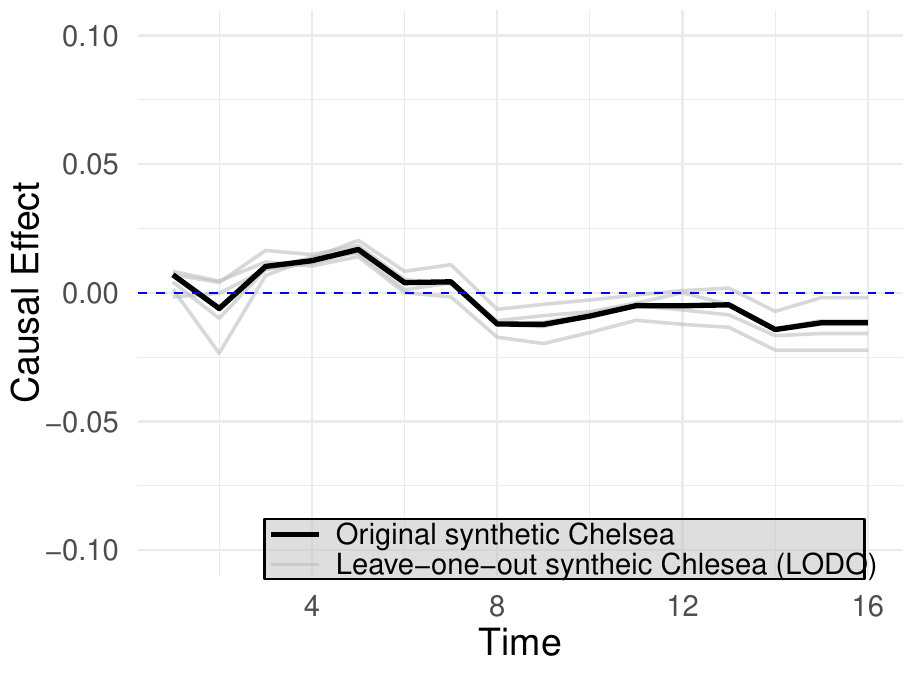}
        \label{fig:placebo_ref_LODO}
    }
    \hfill
    \subfigure[Leave-one-out sensitivity test for the COVID-19 vaccination rates application for Hispanic sub-population (target domain).]{%
        \includegraphics[width=0.45\textwidth, height=0.25\textheight]{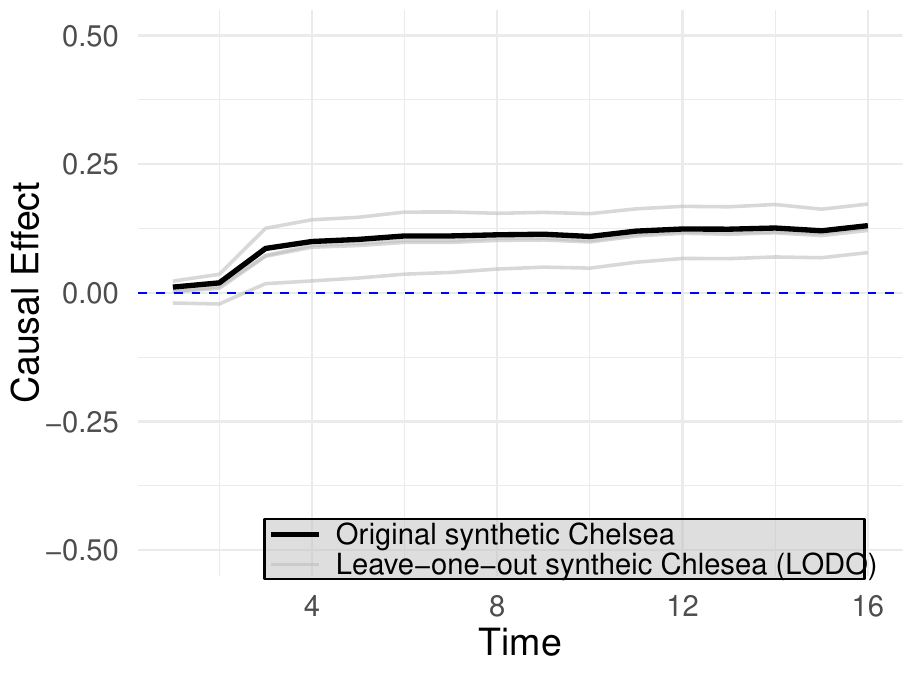}
        \label{fig:placebo_target_LODO}
    }
    \caption{Leave-one-out tests.}
    \label{fig:placebo_LODO}
\end{figure}

\subsection{Placebo Test}

To evaluate the ability of the synthetic control method to predict counterfactual outcomes, we implement a placebo test, an inferential technique originating in \citep{abadie2003economic}. Specifically, we apply the synthetic control method to each unit in the donor pool by generating synthetic controls using the same procedure employed for the target unit, shifting the target unit into the donor pool. We then compare the distribution of these placebo effects with the effect estimated for the target unit. 
By comparing the distribution of estimated placebo effects to the target-unit estimate, we can assess whether the observed effect is significantly larger than would be expected by chance.

The results of the placebo test are shown in Figure~\ref{fig:placebo}. The gray lines show the differences between each of the $30$ donor units and their synthetic controls, while the black line represents that for the target unit.
Figure~\ref{fig:placeboF} shows that the placebo effects are centered around zero, indicating negligible bias in the reference domain.
From Figure~\ref{fig:placeboT}, the estimated difference between the target unit and its synthetic control is substantially larger than that for the donors, demonstrating statistical significance of the estimated positive causal effect of the treatment. 

We also conducted a leave-one-out placebo test by iteratively removing each donor unit with a positive weight ($w > 0$) and re-estimating the synthetic control. Figures~\ref{fig:placeboF_LODO} and \ref{fig:placeboY_LODO} demonstrate that the leave-one-out estimates closely track the original synthetic treated unit. This confirms the robustness of our results across both the reference and target domains for the simulated data.

\section{Application to COVID-19 Vaccination Rate Analysis}
\label{sec:application}

During the COVID-19 pandemic, community organizations played a pivotal role in promoting health equity through localized efforts. 
In our empirical study, we evaluated the effect of the pro-vaccination initiatives of La Colaborativa, a major Hispanic community-based organization in Chelsea, Massachusetts, on the COVID-19 vaccination rate of the Hispanic sub-population of that city. Chelsea was among the 20 municipalities in Massachusetts most severely impacted by COVID-19. This disproportionate impact was evident in both COVID-19 case and death rates and in social determinants of health, particularly among Hispanic (which comprises $60\%$ of Chelsea’s population) and Black sub-populations. 

Despite the urgent need for vaccination, access barriers in the early rollout disproportionately affected non–English-speaking and low-income populations. When access first became available in Massachusetts, eligibility alone did not guarantee access—many residents faced challenges such as language barriers, lack of reliable transportation, and difficulties securing online appointments. These disparities created an urgent need for targeted community-based interventions to ensure equitable vaccine distribution. To address this challenge, La Colaborativa launched a series of pro-vaccination efforts in February/March 2021, in alignment with the Massachusetts Vaccine Equity Initiative. These initiatives included door-to-door outreach, provision of informational materials during free community food distribution, and Facebook Live information sessions, as well as a strategic partnership with the East Boston Neighborhood Health Center to increase accessibility for Chelsea’s Hispanic population. Later, a mobile vaccination clinic was offered by Massachusetts General Hospital.

Evaluating the impact of these interventions poses a significant challenge due to the timing of the available data. The Massachusetts Department of Public Health began recording COVID-19 vaccination rates in March 2021—coinciding with the start of La Colaborativa’s efforts. Hence, no pre-intervention data were collected. This lack of a pre-intervention period prevents the use of traditional causal inference approaches, which rely on pre-treatment trends to establish counterfactual comparisons. To address this limitation, we applied our proposed methodology to estimate the causal effects of La Colaborativa’s intervention.

\begin{figure}[t]
    \centering
    \subfigure[COVID-19 vaccination rates for Black sub-population (reference domain).]{%
        \includegraphics[width=0.45\textwidth, height=0.25\textheight]{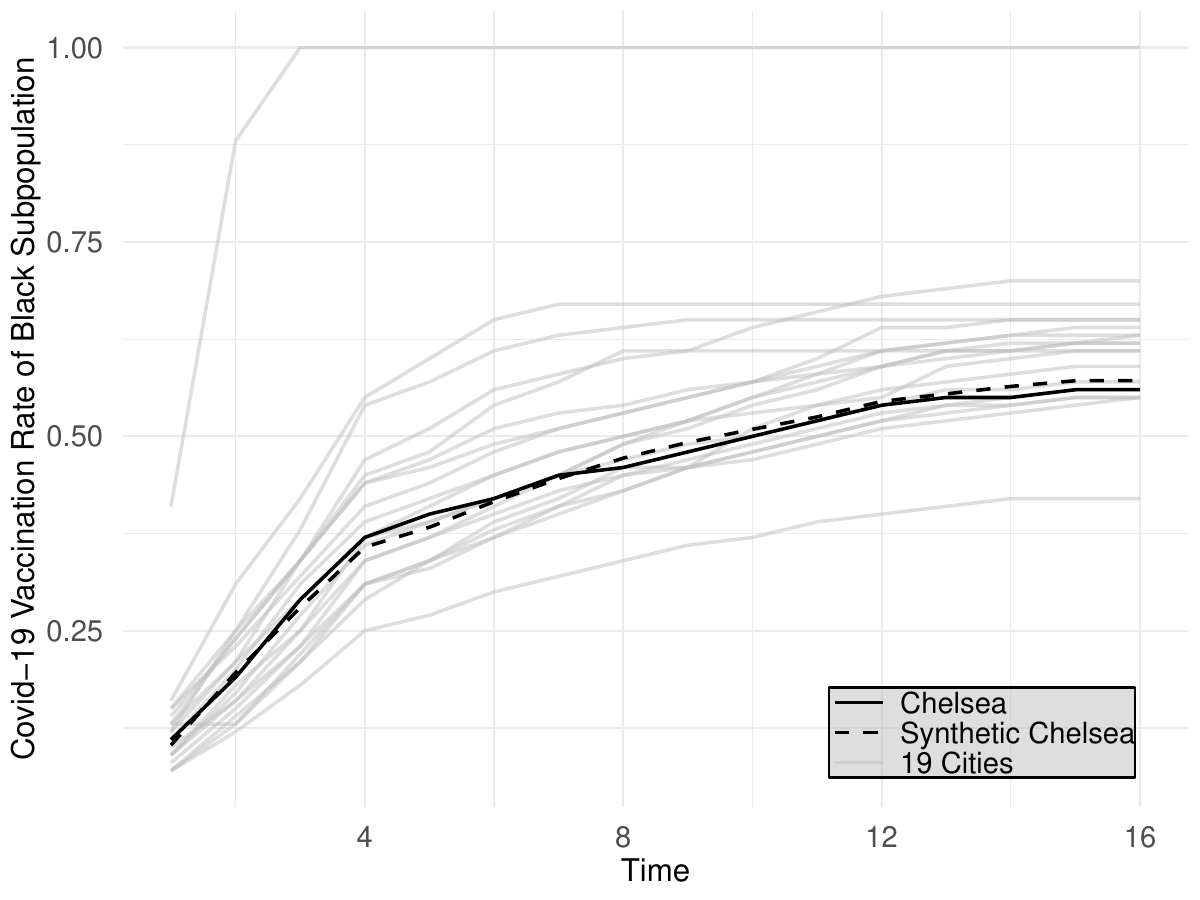}
        \label{fig:real_data_ref}
    }
    \hfill
    \subfigure[COVID-19 vaccination rates for Hispanic sub-population (target domain).]{%
        \includegraphics[width=0.45\textwidth, height=0.25\textheight]{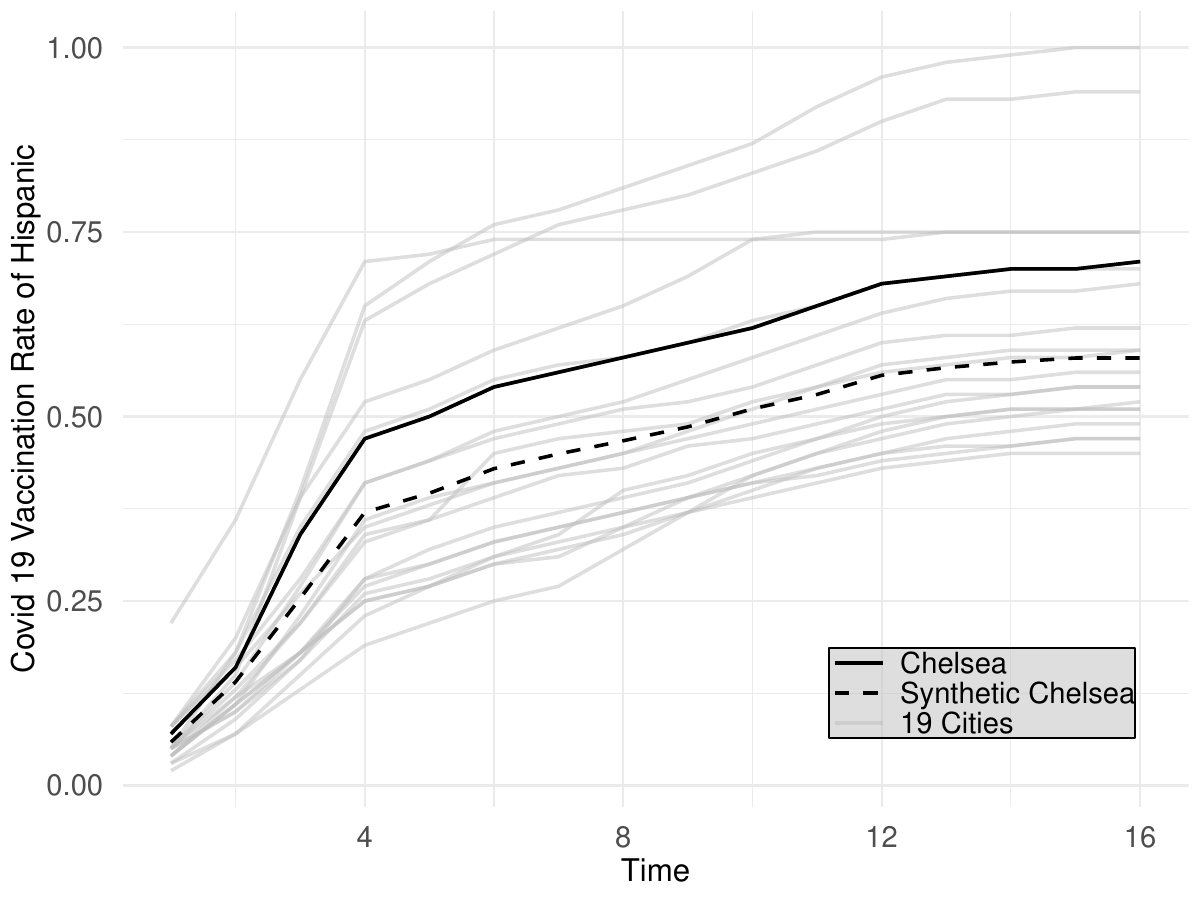}
        \label{fig:real_data_target}
    }
    \caption{COVID-19 vaccination rates for the observed units and synthetic Chelsea in reference and target domains.}
    \label{fig:real_data_refe_target}
\end{figure}

We considered the Hispanic sub-population in 20 cities in Massachusetts as our target domain, with monthly COVID-19 vaccination rate from March 2021 to July 2022 as the outcome variable, and with covariates median income, proportion of the sub-population, median age, and proportion of residents aged 65 or older of the Hispanic sub-population. 
The city of Chelsea, which is our target unit, is among these 20 cities. The remaining cities were chosen based on similarities in their covariates to Chelsea. Importantly, none of the other cities were served by community organizations with reach and staffing comparable to La Colaborativa. We considered the Black sub-population in the same 20 cities as our reference domain. The rationale for this choice is that both Black and Hispanic populations face systemic barriers to healthcare access, potentially comparable levels of vaccine hesitancy or exposure to misinformation on social media, and similar structural inequalities. These shared factors make Black vaccination trends a valuable reference for estimating causal effects in the Hispanic population.

\textbf{Results:} We used the value $0.1$ for both $\eta_Z$ and $\eta_X$. However, as in the simulation study, we performed sensitivity analysis to investigate the dependence of our estimated causal effect on the choice of $\eta_Z$ and $\eta_X$, and we observed a small impact of deviations from these hyperparameters. Figure~\ref{fig:sensitivity_real_data} presents the sensitivity analysis. As seen in this figure, deviations of $\eta_Z$ and $\eta_X$ from $0.1$--either increasing or decreasing--exhibit minimal impact on the estimated causal effects. Specifically, they lead to changes in the estimated causal effects that are bounded by $0.005$.
Table~\ref{tab:covariates} presents the covariates used in the estimation process. 
It shows that the proposed synthetic control method successfully identifies a combination of control units that closely approximates the covariates of Chelsea.
For example, the synthetic Chelsea is significantly closer than the average of all control cities across different covariates.
Although some discrepancies remain, such as in the ``Proportion - Hispanic''. This emphasizes the importance of using a method that allows approximate matching as in our Equations \eqref{eq:match1}--\eqref{eq:match3}.
Table~\ref{tab:predictors_unit_weights} displays the weights assigned to each control city in constructing synthetic Chelsea. The results indicate that the best matching for Chelsea is achieved by combining Revere, Holyoke, Everett, Lawrence, and Springfield. All other units in the donor pool were assigned zero weights.

\begin{table}
\caption{Covariate means for Chelsea, synthetic Chelsea, and the average of control units.}
\label{tab:covariates}
\centering
\small
\begin{tabular}{@{}lccc@{}}
\hline
Covariates & Chelsea & Synthetic Chelsea & Avg. of controls (n=19) \\
\hline
Median income -- Hispanic          & 0.522 & 0.388 & 0.367 \\
Median income -- Black             & 0.206 & 0.371 & 0.388 \\
Proportion -- Hispanic             & 0.784 & 0.479 & 0.294 \\
Proportion -- Black                & 0.362 & 0.350 & 0.310 \\
Median age -- Hispanic             & 0.730 & 0.526 & 0.396 \\
Median age -- Black                & 0.418 & 0.414 & 0.406 \\
Proportion aged 65+ -- Hispanic    & 0.637 & 0.348 & 0.265 \\
Proportion aged 65+ -- Black       & 0.230 & 0.309 & 0.235 \\
\hline
\end{tabular}
\end{table}

\begin{table}[t]
\caption{Unit weights in the synthetic Chelsea.}
\label{tab:predictors_unit_weights}
\centering
\small
\begin{tabular}{@{}lclc@{}}
\hline
Unit        & Weight & Unit        & Weight \\
\hline
Bellingham  & 0.000  & Marlborough & 0.000 \\
Chicopee    & 0.000  & Methuen     & 0.000 \\
Everett     & 0.228  & Milford     & 0.000 \\
Fitchburg   & 0.000  & New Bedford & 0.000 \\
Haverhill   & 0.000  & Peabody     & 0.000 \\
Holyoke     & 0.310  & Revere      & 0.330 \\
Lawrence    & 0.037  & Southbridge & 0.000 \\
Leominster  & 0.000  & Springfield & 0.094 \\
Lynn        & 0.000  & Waltham     & 0.000 \\
Worcester   & 0.000  &             &       \\
\hline
\end{tabular}
\end{table}

Figure~\ref{fig:real_data_refe_target} presents the COVID-19 vaccination rates for the observed units and synthetic Chelsea in the reference and target domains. In Figure~\ref{fig:real_data_ref}, synthetic Chelsea closely follows actual Chelsea in the reference domain, demonstrating that our algorithm has succeeded in obtaining a good match. This, along with the approximate matching in observed covariates, supports the validity of synthetic Chelsea as a counterfactual for actual Chelsea in the absence of the intervention. In Figure~\ref{fig:real_data_target}, the observed Chelsea Hispanic COVID-19 vaccination rate and its synthetic version begin to diverge after the intervention starts, with the gap widening over time. 
Specifically, Chelsea exhibits a steeper rise in the earlier months. The divergence between the black solid line (actual Chelsea) and the black dashed line (synthetic Chelsea) suggests that La Colaborativa’s efforts contributed to a greater increase in Hispanic vaccination rates than would have been expected without the intervention. 
We also provide the results of the placebo test in Figures~\ref{fig:placeboF_data} and \ref{fig:placeboT_data}, which confirm the statistical significance of our findings regarding the causal effect. We also conducted the leave-one-out placebo test, leaving out one of the five donor cities (Revere, Holyoke, Everett, Springfield and Lawrence) from the donor pool. In Figures~\ref{fig:placebo_ref_LODO} and \ref{fig:placebo_target_LODO}, the results show the five additional synthetic controls closely track the original synthetic Chelsea in both reference and target domain.

The estimated causal effect of the community organization intervention based on our proposed methodologies is as follows. Over the 16-month period from March 2021 to July 2022, the COVID-19 vaccination rate for the Hispanic sub-population is estimated to have increased by $13.6\%$ using the linear equi-confounding method, $13.2\%$ using the logarithmic equi-confounding method, and $10.1\%$ using the synthetic control data fusion method. All three methods indicate a similarly large positive causal effect, affirming the effectiveness of the community organization intervention.

\section{Conclusion}
\label{sec:conc}

We introduced two data fusion–based methods for identifying and estimating causal effects in panel data settings without pre-intervention data. The equi-confounding data fusion and synthetic control data fusion methods address challenges posed by sudden events that leave researchers without the pre-intervention period required by conventional panel-data approaches. We demonstrated the efficacy of these methods by deriving bounds on the bias and providing conditions under which these bounds converge to zero. We showed that the synthetic control data fusion method is more flexible, as it generally requires weaker assumptions, whereas the equi-confounding methods are simpler to implement and interpret. Our simulation results corroborate the theoretical findings and demonstrate rapidly decreasing biases. 

Our proposed causal panel-data fusion methods have high potential to be used in situations in which sudden shocks preclude the collection of pre-intervention data. As an illustrative application, we applied the proposed methods to estimate the causal effect of a community organization vaccination intervention in Chelsea, Massachusetts. Over the 16-month period from March 2021 to July 2022, the Hispanic COVID-19 vaccination rate in Chelsea is estimated to have increased by $13.6\%$ using the linear equi-confounding method, $13.2\%$ using the logarithmic equi-confounding method, and $10.1\%$ using the synthetic control data fusion method. All three methods indicate a similarly large positive causal effect, affirming the effectiveness of the community organization intervention.

\subsection*{Replication Data and \textsf{R} Package}
The replication code, data, and an \textsf{R} package are available in the following GitHub repositories:\\
Replication materials: 
\href{https://github.com/ZouYang31/Causal-Panel-Fusion}{https://github.com/ZouYang31/Causal-Panel-Fusion};\\
\textsf{R} package:
\href{https://github.com/ZouYang31/causalpanelfusion}{https://github.com/ZouYang31/causalpanelfusion}.


\subsection*{Acknowledgements}
\label{sec:acknow}
We are grateful to the Massachusetts Department of Public Health.
We extend special thanks to Boudu Bingay, Elizabeth T. Russo, and Joshua Norville for their support throughout this project.


\bibliographystyle{apalike}
\bibliography{references} 

\newpage

\appendices

\section{Proofs}

\section*{Proof of Theorem \ref{thm:equi}}
\label{sec:pf_eq}

Due to the consistency assumption, $\psi_0$ can be written as 
\[
\psi_0=\frac{1}{S}\sum_{s=1}^S\E[Y_{1,s}-Y_{1,s}^{(0)}]=\E[Y_1]-\E[Y_1^{(0)}].
\]
By Assumption \ref{assm:equi}, 
\begin{align*}
\E[Y_1^{(0)}]
&= 
\frac{1}{J}\sum_{i=2}^{J+1}\E[Y_i^{(0)}]
+\E[F_1]
-\frac{1}{J}\sum_{i=2}^{J+1}\E[F_i].
\end{align*}
Therefore,
\begin{align*}
\psi_0
&=\E[Y_1]-\E[Y_1^{(0)}]\\
&=\E[Y_1]-\frac{1}{J}\sum_{i=2}^{J+1}\E[Y_i^{(0)}]
-\E[F_1]
+\frac{1}{J}\sum_{i=2}^{J+1}\E[F_i].
\end{align*}
Finally, applying the consistency assumption one more time concludes the desired result.

\qed

\section*{Proof of Theorem \ref{thm:logequi}}
\label{sec:pf_logeq}

Due to the consistency assumption, $\psi_0$ can be written as 
\[
\psi_0=\frac{1}{S}\sum_{s=1}^S\E[Y_{1,s}-Y_{1,s}^{(0)}]=\E[Y_1]-\E[Y_1^{(0)}].
\]
By Assumption \ref{assm:logequi},
\begin{align*}
\E[Y_1^{(0)}]
&= 
\frac{ \E [F_1]}{\E[\sum_{i=2}^{J+1} F_i]} \E[\sum_{i=2}^{J+1} Y_i^{(0)}].
\end{align*}
Therefore, applying the consistency assumption one more time, we obtain
\begin{align*}
\psi_0
&= \E[Y_1]-\E[Y_1^{(0)}] \\
&=
\E[Y_1] - \frac{ \E [F_1]}{\E[\sum_{i=2}^{J+1} F_i]} \E[\sum_{i=2}^{J+1} Y_i].
\end{align*}

\qed
   
\section*{Proof of Theorem \ref{thm:logequibound}}
\label{sec:pf_logeqbound}
From Theorem \ref{thm:logequi}, the parameter of interest $\psi_0$ is identified as
\[
\psi_0 =\E[Y_1] - \frac{\E[F_1]}{\E[\sum_{i=2}^{J+1}F_i]} \E \left[\sum_{i=2}^{J+1} Y_i \right].
\]
For the estimator $\hat\psi^{eq2}$, we have
\begin{align*}
\E[\hat\psi^{eq2}]
&=\E[Y_1]-\E \left[\frac{F_1}{\sum_{i=2}^{J+1}F_i}\sum_{i=2}^{J+1}Y_i\right].
\end{align*}
Hence, using Assumption \ref{assm:bound}, the absolute bias of $\hat\psi^{eq2}$ will be bounded as
\begin{equation}
\label{eq:thm3-1}
\begin{aligned}
&\left|\E[\hat\psi^{eq2}]-\psi_0 \right|\\
&= 
\Bigg|
\E \left[\frac{F_1}{\frac{1}{J}\sum_{i=2}^{J+1}F_i} \right]\E \left[\frac{1}{J}\sum_{i=2}^{J+1}Y_i \right]
-\frac{\E[F_1]}{\E[\frac{1}{J}\sum_{i=2}^{J+1}F_i]} \E \left[ \frac{1}{J}\sum_{i=2}^{J+1} Y_i \right]\\
&\qquad+\E \left[\frac{F_1}{\frac{1}{J}\sum_{i=2}^{J+1}F_i} \frac{1}{J}\sum_{i=2}^{J+1}Y_i \right]
-\E \left[\frac{F_1}{\frac{1}{J}\sum_{i=2}^{J+1}F_i} \right]\E \left[\frac{1}{J}\sum_{i=2}^{J+1}Y_i \right]
\Bigg|\\
&= 
\Bigg|
\E \left[\frac{1}{J}\sum_{i=2}^{J+1}Y_i \right]\Bigg\{\E \left[\frac{F_1}{\frac{1}{J}\sum_{i=2}^{J+1}F_i} \right]
-\frac{\E[F_1]}{\E[\frac{1}{J}\sum_{i=2}^{J+1}F_i]}\Bigg\} 
+cov \left(\frac{F_1}{\frac{1}{J}\sum_{i=2}^{J+1}F_i}, \frac{1}{J}\sum_{i=2}^{J+1}Y_i \right)\Bigg|\\
&\le
L_y\left|  \E \left[\frac{F_1}{ \frac{1}{J}\sum_{i=2}^{J+1}F_i} \right] - \frac{\E[F_1]}{\E[ \frac{1}{J}\sum_{i=2}^{J+1}F_i]}  \right|+C(J).
\end{aligned}
\end{equation}
To simplify the notation, let $A:=F_1$ and $B:=\frac{1}{J}\sum_{i=2}^{J+1}F_i$. We aim to bound $\left|  \E \left[\frac{A}{B} \right] - \frac{\E[A]}{\E[B]}  \right|$. We note that
\begin{equation}
\label{eq:thm3-2}
\begin{aligned}
\left|  \E \left[\frac{A}{B} \right] - \frac{\E[A]}{\E[B]}  \right|
&=
\left|  \E \left[\frac{A}{B} \right]-\E[A]\E\left[\frac{1}{B}\right]+\E[A]\E\left[\frac{1}{B}
\right] - \frac{\E[A]}{\E[B]}  \right|\\
&=
\left| cov\left(A,\frac{1}{B}\right)+\E[A]\left\{\E\left[\frac{1}{B}
\right] - \frac{1}{\E[B]}\right\}  \right|\\
&\le
\left| cov\left(A,\frac{1}{B}\right)\right|+L_f\left|\E\left[\frac{1}{B}
\right] - \frac{1}{\E[B]}  \right|.
\end{aligned}
\end{equation}

\begin{lemma}
\label{lem:lipschitz_variance}
Let $g:\mathbb{R}\to\mathbb{R}$ be $L$-Lipschitz, i.e., $|g(x)-g(y)|\le L|x-y|$ for all $x,y$.
For any square-integrable random variable $Z$,
\[
\mathrm{var}\big(g(Z)\big)\;\le\; L^2\,\mathrm{var}(Z).
\]
\end{lemma}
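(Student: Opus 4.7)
The statement is a standard consequence of Lipschitz continuity combined with one of two elementary variance identities. I will take the "independent copy" route since it avoids any appeal to the minimizing property of the mean and is completely symmetric. Let $Z'$ be an independent copy of $Z$ defined on a product probability space, so that $Z'$ has the same distribution as $Z$ and is independent of $Z$. Since $Z$ is square-integrable, so is $g(Z)$ (the Lipschitz condition implies $|g(Z)|\le |g(0)|+L|Z|$), and hence every expectation below is well defined.

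The key identity I will use is that for any square-integrable random variable $W$ with independent copy $W'$,
\[
\mathrm{var}(W)\;=\;\tfrac{1}{2}\,\E\big[(W-W')^{2}\big],
\]
which follows by expanding the square and using $\E[W]=\E[W']$ together with independence. Applying this identity to $W=g(Z)$ yields
\[
\mathrm{var}\big(g(Z)\big)\;=\;\tfrac{1}{2}\,\E\big[(g(Z)-g(Z'))^{2}\big].
\]
The Lipschitz hypothesis gives $(g(Z)-g(Z'))^{2}\le L^{2}(Z-Z')^{2}$ pointwise, so monotonicity of expectation and the same identity applied to $W=Z$ give
\[
\mathrm{var}\big(g(Z)\big)\;\le\;\tfrac{L^{2}}{2}\,\E\big[(Z-Z')^{2}\big]\;=\;L^{2}\,\mathrm{var}(Z),
\]
which is the desired bound.

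There is no genuine obstacle here; the only care needed is to ensure integrability so that the variance identity and Fubini-style reasoning in the product space are legitimate, which is immediate from the hypothesis that $Z$ is square-integrable and $g$ is Lipschitz. An alternative, equally short route would use the minimizing property $\mathrm{var}(g(Z))=\min_{c}\E[(g(Z)-c)^{2}]\le \E[(g(Z)-g(\E Z))^{2}]\le L^{2}\,\mathrm{var}(Z)$; either argument is a couple of lines and suffices.
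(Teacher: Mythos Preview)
Your proof is correct and follows exactly the same approach as the paper: introduce an independent copy $Z'$, use the identity $\mathrm{var}(W)=\tfrac{1}{2}\E[(W-W')^2]$, apply the Lipschitz bound pointwise, and conclude. The paper's argument is essentially a one-line version of yours.
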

\begin{proof}
Let $Z'$ be an independent copy of $Z$. Then
\[
\mathrm{var}\big(g(Z)\big)
=\frac{1}{2}E\!\left[\big(g(Z)-g(Z')\big)^2\right]
\le \frac{1}{2}E\!\left[L^2(Z-Z')^2\right]
=L^2\,\mathrm{var}(Z).
\]

\end{proof}

Using Cauchy–Schwarz inequality and Lemma \ref{lem:lipschitz_variance} with $g(x)=1/x$ on $[l_f,L_f]$, which is $1/l_f^2$-Lipschitz, we have
\begin{equation}
\label{eq:thm3-3}
 \left| cov\left(A,\frac{1}{B}\right)\right|
 \le\sqrt{var(A)var(\frac{1}{B})}
 \le\frac{1}{l_f^2}\sqrt{var(A)var(B)}.
\end{equation}

\begin{lemma}
\label{lem:second_derivative_gap}
Let $f$ be twice continuously differentiable on an interval containing the support of $B$.
Then
\[
\big|E[f(B)]-f(E[B])\big|
\;\le\;
\frac{1}{2}\,\sup_{x}\big|f''(x)\big|\;\mathrm{var}(B).
\]
\end{lemma}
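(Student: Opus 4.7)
The plan is to establish the bound via a second-order Taylor expansion of $f$ around $\mu:=E[B]$, exploiting the fact that the first-order term vanishes in expectation and that $f''$ is uniformly bounded by hypothesis.

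First, I would invoke Taylor's theorem with the Lagrange form of the remainder: for each realization of $B$, there exists a point $\xi(B)$ lying between $B$ and $\mu$ such that
\[
f(B) \;=\; f(\mu) \;+\; f'(\mu)\,(B-\mu) \;+\; \tfrac{1}{2}\,f''\!\big(\xi(B)\big)\,(B-\mu)^2.
\]
A brief measurability remark may be in order — one can choose $\xi(B)$ in a measurable way (for instance by taking the integral form of the remainder, $\tfrac{1}{2}f''(\xi(B))(B-\mu)^2 = \int_{\mu}^{B}(B-u)f''(u)\,du$, which is manifestly measurable in $B$), so that the expectation of the remainder is well defined.

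Next I would take expectations on both sides. The linear term vanishes because $E[B-\mu]=0$, leaving
\[
E[f(B)] - f(\mu) \;=\; \tfrac{1}{2}\,E\!\left[f''\!\big(\xi(B)\big)\,(B-\mu)^2\right].
\]
Applying the triangle inequality and the uniform bound $|f''(\xi(B))|\le M:=\sup_x|f''(x)|$ pointwise yields
\[
\big|E[f(B)]-f(\mu)\big| \;\le\; \tfrac{1}{2}\,M\,E[(B-\mu)^2] \;=\; \tfrac{1}{2}\,M\,\mathrm{var}(B),
\]
which is exactly the claimed inequality.

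The proof is essentially mechanical; the only point worth handling carefully is ensuring that the remainder term is integrable and measurable, which the integral form of the Taylor remainder settles cleanly. No obstacle beyond this bookkeeping is expected. In the subsequent use of the lemma (to bound $|E[1/B]-1/E[B]|$ with $g(x)=1/x$ on $[l_f,L_f]$, where $|g''(x)|=2/x^3\le 2/l_f^3$), this result will yield the $L_f/l_f^3$-type terms appearing in the statement of Theorem~\ref{thm:logequibound}.
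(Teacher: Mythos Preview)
Your proposal is correct and follows essentially the same approach as the paper: a second-order Taylor expansion with Lagrange remainder around $E[B]$, noting that the linear term has zero expectation and bounding $|f''(\xi)|$ by its supremum. Your additional remarks on measurability of the remainder and on the downstream application to $f(x)=1/x$ are helpful but not required for the argument.
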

\begin{proof}
Write $m=E[B]$. Taylor’s theorem with Lagrange remainder gives
$f(B)=f(m)+f'(m)(B-m)+\tfrac{1}{2}f''(\xi)(B-m)^2$ for some random $\xi$
between $B$ and $m$. Taking expectations and using $E[B-m]=0$ yields
$E[f(B)]-f(m)=\tfrac{1}{2}E\big[f''(\xi)(B-m)^2\big]$.
Taking absolute values and bounding $|f''(\xi)|$ by $\sup_x|f''(x)|$ proves the claim.

\end{proof}
\begin{sloppypar}
We apply Lemma \ref{lem:second_derivative_gap} with $f(x)=1/x$ on $[l_f,L_f]$. Here, $f''(x)=2/x^3$, hence $\sup_{x\in[l_f,L_f]}\big|f''(x)\big|=2/l_f^3$. This implies that
\end{sloppypar}
\begin{equation}
\label{eq:thm3-4}
    \Big|E\!\left[\frac{1}{B}\right]-\frac{1}{E[B]}\Big|
\;\le\;
\frac{1}{l_f^3}\,\mathrm{var}(B).
\end{equation}
Combining Equations \eqref{eq:thm3-2}-\eqref{eq:thm3-4}, we have
\begin{equation*}
\begin{aligned}
\left|  \E \left[\frac{A}{B} \right] - \frac{\E[A]}{\E[B]}  \right|
&\le
\left| cov\left(A,\frac{1}{B}\right)\right|+L_f\left|\E\left[\frac{1}{B}
\right] - \frac{1}{\E[B]}  \right|\\
&\le \frac{1}{l_f^2}\sqrt{var(A)var(B)}
+L_f\frac{1}{l_f^3}\,\mathrm{var}(B).
\end{aligned}
\end{equation*}

Each $F_i\in[l_f,L_f]$; hence $\mathrm{var}(F_i)\le \Delta_f^2/4$, giving $\mathrm{var}(A)\le \Delta_f^2/J$ and 
\[
\mathrm{var}(B)
=
\frac{1}{J^2}\sum_{i=2}^{J+1}\mathrm{var}(F_i)
+\frac{1}{J^2}\sum_{\substack{i,j=2\\ i\neq j}}^{J+1}\mathrm{cov}(F_i,F_j)
\;\le\;
\frac{1}{J^2}\sum_{i=2}^{J+1}\mathrm{var}(F_i) + \tau(J)
\;\le\;
\Delta_f^2/(4J)+\tau(J).
\]
 Therefore,
\begin{equation}
\label{eq:thm3-5}
\begin{aligned}
\left|  \E \left[\frac{A}{B} \right] - \frac{\E[A]}{\E[B]}  \right|
&\le \frac{1}{l_f^2}\sqrt{var(A)var(B)}
+L_f\frac{1}{l_f^3}\,\mathrm{var}(B)\\
&\le
\frac{1}{l_f^2}
\sqrt{\frac{\Delta_f^2}{4}\Big(\frac{\Delta_f^2}{4J}+\tau(J)\Big)}
\;+\;
\frac{L_f}{l_f^3}\Big(\frac{\Delta_f^2}{4J}+\tau(J)\Big)\\
&\le
\frac{\Delta_f^2}{4\,l_f^2}\,\frac{1}{\sqrt{J}}
\;+\;
\frac{\Delta_f}{2\,l_f^2}\sqrt{\tau(J)}
\;+\;
\frac{L_f}{l_f^3}\Big(\frac{\Delta_f^2}{4J}+\tau(J)\Big).
\end{aligned}
\end{equation}
Combining Equations \eqref{eq:thm3-1} and \eqref{eq:thm3-5} conclude the desired result.

\qed

\section*{Proof of Corollary \ref{cor2}}

The proof is the same as that of Theorem \ref{thm:logequibound}, except for our treatment of $\left| cov\left(A,\frac{1}{B}\right)\right|$ in Display \eqref{eq:thm3-2}, for which we argue as follows:
\begin{align*}
&\E[B]^2=B\E[B]-B(B-\E[B])+(B-\E[B])^2\\
&\Rightarrow\frac{1}{B}=\frac{1}{\E[B]}-\frac{(B-\E[B])}{\E[B]^2}+\frac{(B-\E[B])^2}{B\E[B]^2}.
\end{align*}
Therefore, we have
\begin{align*}
cov\left(A,\frac{1}{B}\right)
&=-\frac{cov(A,B)}{\E[B]^2}+cov\left(A,\frac{(B-\E[B])^2}{B\E[B]^2}\right)\\
&\le\frac{1}{l_f^2}|cov(A,B)|+\left|\E\left[\{A-\E[A]\}\frac{(B-\E[B])^2}{B\E[B]^2}\right]\right|\\
&\le\frac{1}{l_f^2}\tau_1(J)+\frac{\Delta_f}{l_f^3}\E[(B-\E[B])^2]\\&\le\frac{\tau_1(J)}{l_f^2}+\frac{\Delta_f}{l_f^3}var(B).
\end{align*}

\qed

\section*{Proof of Theorem \ref{thm:syn}}
\label{sec:pf_sc}

Consider the factor models in Assumption~\ref{assm:factormodel}:
\begin{align*}
    Y_{i,s} &= \varrho_s + \varphi_s^\top X_i + \vartheta_s^\top\mu_i
   + \tilde\vartheta_s^{\top} \tilde\mu_i^Y
    + \alpha_{i,s}A_{i,s} + \varepsilon_{i,s},\\
    F_{i,t} &= \rho_t + \phi_t^\top Z_i + \theta_t^\top\mu_i + \tilde\theta_t^{\top} \tilde\mu_i^F+ \epsilon_{i,t}.
\end{align*}
Under this model, the potential outcome under control satisfies
\[
Y_{i,s}^{(0)} = \varrho_s + \varphi_s^\top X_i + \vartheta_s^\top \mu_i
+ \tilde\vartheta_s^{\top} \tilde\mu_i^Y + \varepsilon_{i,s}.
\]

Define $\bar F_i := [F_{i,1}~\cdots~F_{i,T}]^\top$ and $\epsilon_i^R := [\epsilon_{i,1}~\cdots~\epsilon_{i,T}]^\top$.
Let $\phi^R$, $\theta^R$, and $\tilde\theta^R$ be the $T\times d_r$, $T\times d_u$, and
$T\times d^F_{\tilde u}$ matrices whose $t$th rows are $\phi_t^\top$, $\theta_t^\top$, and $\tilde\theta_t^\top$, respectively.

Fix $s\in\{1,\ldots,S\}$. 
The weighted average of the potential outcomes of the  donors is
\[
    \sum_{i=2}^{J+1} w_i Y_{i,s}=\sum_{i=2}^{J+1} w_i Y_{i,s}^{(0)} = \varrho_s  + \varphi_s^\top \sum_{i=2}^{J+1} w_i X_i + \vartheta_s^\top \sum_{i=2}^{J+1} w_i \mu_i 
    + \tilde\vartheta_s^{\top} \sum_{i=2}^{J+1} w_i \tilde\mu^Y_i
    + \sum_{i=2}^{J+1} w_i\varepsilon_{i,s}.
\]
Hence, we have
\begin{equation}
\label{eq:pf1_new}
Y_{1,s}^{(0)} - \sum_{i=2}^{J+1} w_i Y_{i,s}
= \varphi_s^\top\Big(X_1 - \sum_{i=2}^{J+1} w_i X_i\Big)
+ \vartheta_s^\top\Big(\mu_1 - \sum_{i=2}^{J+1} w_i \mu_i\Big)
+ \tilde\vartheta_s^\top\Big(\tilde\mu_1^Y - \sum_{i=2}^{J+1} w_i \tilde\mu_i^Y\Big)
+ \varepsilon_{1,s} - \sum_{i=2}^{J+1} w_i \varepsilon_{i,s}.
\end{equation}

Moreover,
\[
\bar F_1 - \sum_{i=2}^{J+1} w_i \bar F_i
= \phi^R\Big(Z_1 - \sum_{i=2}^{J+1} w_i Z_i\Big)
+ \theta^R\Big(\mu_1 - \sum_{i=2}^{J+1} w_i \mu_i\Big)
+ \tilde\theta^R\Big(\tilde\mu_1^F - \sum_{i=2}^{J+1} w_i \tilde\mu_i^F\Big)
+ \epsilon_1^R - \sum_{i=2}^{J+1} w_i \epsilon_i^R.
\]

Based on Assumption \ref{assm:factormodel}, the matrix $\theta^{R^\top}\theta^R$ is invertible.
Multiplying by $\vartheta_s^\top ({\theta^{R}}^\top \theta^R)^{-1}{\theta^{R}}^\top$ yields
\begin{align*}
   \vartheta_s^\top ({\theta^{R}}^\top \theta^R)^{-1}{\theta^{R}}^\top\{\bar F_1 - \sum_{i=2}^{J+1} w_i \bar F_i \} 
   =&~ \vartheta_s^\top ({\theta^{R}}^\top \theta^R)^{-1}{\theta^{R}}^\top \phi^R(Z_1 -\sum_{i=2}^{J+1} w_i Z_i) \\
    &+ \vartheta_s^\top ({\theta^{R}}^\top \theta^R)^{-1}{\theta^{R}}^\top\theta^R (\mu_1 - \sum_{i=2}^{J+1} w_i \mu_i) \\
&+\vartheta_s^\top ({\theta^{R}}^\top \theta^R)^{-1}{\theta^{R}}^\top\tilde\theta^R (\tilde\mu^F_1 - \sum_{i=2}^{J+1} w_i \tilde\mu^F_i) \\   
    &+\vartheta_s^\top ({\theta^{R}}^\top \theta^R)^{-1}{\theta^{R}}^\top\{ \epsilon_1^R - \sum_{i=2}^{J+1} w_i \epsilon_i^R\},
\end{align*}
which implies that
\begin{equation}
\label{eq:pf2}
\begin{aligned}
\vartheta_s^\top  (\mu_1 - \sum_{i=2}^{J+1} w_i \mu_i)
=&~\vartheta_s^\top ({\theta^{R}}^\top \theta^R)^{-1}{\theta^{R}}^\top\{\bar F_1 - \sum_{i=2}^{J+1} w_i \bar F_i \}\\ 
&- \vartheta_s^\top ({\theta^{R}}^\top \theta^R)^{-1}{\theta^{R}}^\top \phi^R(Z_1 -\sum_{i=2}^{J+1} w_i Z_i) \\
&-\vartheta_s^\top ({\theta^{R}}^\top \theta^R)^{-1}{\theta^{R}}^\top\tilde\theta^R (\tilde\mu^F_1 - \sum_{i=2}^{J+1} w_i \tilde\mu^F_i) \\ 
&-\vartheta_s^\top ({\theta^{R}}^\top \theta^R)^{-1}{\theta^{R}}^\top\{ \epsilon_1^R - \sum_{i=2}^{J+1} w_i \epsilon_i^R\}.
\end{aligned}
\end{equation}

Substituting into \eqref{eq:pf1_new} yields
\begingroup
\allowdisplaybreaks
\begin{equation}
\label{eq:bias_new}
\begin{aligned}
Y_{1,s}^{(0)} - \sum_{i=2}^{J+1} w_i Y_{i,s}
=&~ \varphi_s^\top\Big(X_1 - \sum_{i=2}^{J+1} w_i X_i\Big)\\
&+ \vartheta_s^\top(\theta^{R^\top}\theta^R)^{-1}\theta^{R^\top}
\Big(\bar F_1 - \sum_{i=2}^{J+1} w_i \bar F_i\Big)\\
&-\vartheta_s^\top(\theta^{R^\top}\theta^R)^{-1}\theta^{R^\top}\phi^R
\Big(Z_1 - \sum_{i=2}^{J+1} w_i Z_i\Big)\\
&-\vartheta_s^\top(\theta^{R^\top}\theta^R)^{-1}\theta^{R^\top}\tilde\theta^R
\Big(\tilde\mu_1^F - \sum_{i=2}^{J+1} w_i \tilde\mu_i^F\Big)\\
&-\vartheta_s^\top(\theta^{R^\top}\theta^R)^{-1}\theta^{R^\top}
\Big(\epsilon_1^R - \sum_{i=2}^{J+1} w_i \epsilon_i^R\Big)\\
&+ \tilde\vartheta_s^\top\Big(\tilde\mu_1^Y - \sum_{i=2}^{J+1} w_i \tilde\mu_i^Y\Big)
+ \varepsilon_{1,s} - \sum_{i=2}^{J+1} w_i \varepsilon_{i,s}.
\end{aligned}
\end{equation}
\endgroup

Define the following terms:
\begin{align*}
R_{1s} &:= \vartheta_s^\top(\theta^{R^\top}\theta^R)^{-1}\theta^{R^\top}\sum_{i=2}^{J+1} w_i \epsilon_i^R,\\
R_{2s} &:= -\vartheta_s^\top(\theta^{R^\top}\theta^R)^{-1}\theta^{R^\top}\tilde\theta^R
\Big(\tilde\mu_1^F - \sum_{i=2}^{J+1} w_i \tilde\mu_i^F\Big),\\
R_{3s} &:= -\vartheta_s^\top(\theta^{R^\top}\theta^R)^{-1}\theta^{R^\top}\epsilon_1^R,\\
R_{4s} &:= \varepsilon_{1,s} - \sum_{i=2}^{J+1} w_i \varepsilon_{i,s},\\
R_{5s} &:= \tilde\vartheta_s^\top\Big(\tilde\mu_1^Y - \sum_{i=2}^{J+1} w_i \tilde\mu_i^Y\Big),\\
R_{6s} &:= \varphi_s^\top\Big(X_1 - \sum_{i=2}^{J+1} w_i X_i\Big),\\
R_{7s} &:= \vartheta_s^\top(\theta^{R^\top}\theta^R)^{-1}\theta^{R^\top}
\Big(\bar F_1 - \sum_{i=2}^{J+1} w_i \bar F_i\Big),\\
R_{8s} &:= \vartheta_s^\top(\theta^{R^\top}\theta^R)^{-1}\theta^{R^\top}\phi^R
\Big(Z_1 - \sum_{i=2}^{J+1} w_i Z_i\Big).
\end{align*}
Then \eqref{eq:bias_new} can be written as
\[
Y_{1,s}^{(0)} - \sum_{i=2}^{J+1} w_i Y_{i,s}
= R_{6s} + R_{7s} - R_{8s} + R_{1s} + R_{2s} + R_{3s} + R_{4s} + R_{5s}.
\]

By Assumption \ref{assm:factormodel}, 
$\mathbb E[R_{3s}]=0$. Moreover, since
$\{\varepsilon_{i,s}\}_{i,s}$ and $\{\tilde\mu_i^Y\}_i$ are independent of $\mathcal G$ and have mean
zero, and $w$ is $\mathcal G$-measurable, we have $\mathbb E[R_{4s}]=\mathbb E[R_{5s}]=0$.
Therefore, by the triangle inequality,
\begin{equation}
\label{eq:triangle_new}
\left|\mathbb E\left[Y_{1,s}^{(0)} - \sum_{i=2}^{J+1} w_i Y_{i,s}\right]\right|
\le
|\mathbb E[R_{6s}]|+|\mathbb E[R_{7s}]|+|\mathbb E[R_{8s}]|
+|\mathbb E[R_{1s}]|+|\mathbb E[R_{2s}]|.
\end{equation}

\paragraph{Bounding $|\mathbb E[R_{6s}]|,|\mathbb E[R_{7s}]|,|\mathbb E[R_{8s}]|$ using conditions
\eqref{eq:match1}--\eqref{eq:match3}
.}
First, by Cauchy--Schwarz and $\|\varphi_s\|_2\le \bar\varphi$,
\[
|\mathbb E[R_{6s}]|
\le \|\varphi_s\|_2 \left\|X_1 - \sum_{i=2}^{J+1} w_i X_i\right\|_2
\le \bar\varphi \sqrt{d_t}\,c,
\]
where the last step used condition \eqref{eq:match3}

Let $b_s := \theta^R(\theta^{R^\top}\theta^R)^{-1}\vartheta_s\in\mathbb R^T$. Then
$\vartheta_s^\top(\theta^{R^\top}\theta^R)^{-1}\theta^{R^\top}=b_s^\top$ and
\[
|\mathbb E[R_{7s}]|
\le \|b_s\|_2\left\|\bar F_1 - \sum_{i=2}^{J+1} w_i \bar F_i\right\|_2.
\]
As in the standard argument,
\[
\|b_s\|_2
\le \|\theta^R\|_{\mathrm{op}}\;\|(\theta^{R^\top}\theta^R)^{-1}\|_{\mathrm{op}}\;\|\vartheta_s\|_2
\le \frac{\sqrt{T}\,\bar\theta}{T\underline\xi}\,\bar\vartheta
= \frac{\bar\vartheta\,\bar\theta}{\underline\xi}\,\frac{1}{\sqrt{T}}.
\]
Combining with condition \eqref{eq:match1}
i.e., \ $\|\bar F_1-\sum_i w_i\bar F_i\|_2\le \sqrt{T}\,c$, gives
\[
|\mathbb E[R_{7s}]| \le \frac{\bar\vartheta\,\bar\theta}{\underline\xi}\,c.
\]

Similarly,
\[
|\mathbb E[R_{8s}]|
\le \|b_s^\top \phi^R\|_2 \left\|Z_1 - \sum_{i=2}^{J+1} w_i Z_i\right\|_2
\le \|b_s\|_2\,\|\phi^R\|_{\mathrm{op}}\,\sqrt{d_r}\,c,
\]
using condition \eqref{eq:match2}.
Finally, $\|\phi^R\|_{\mathrm{op}}\le \sqrt{\sum_{t=1}^T\|\phi_t\|_2^2}\le \sqrt{T}\,\bar\phi$, hence
\[
|\mathbb E[R_{8s}]|
\le \left(\frac{\bar\vartheta\,\bar\theta}{\underline\xi}\,\frac{1}{\sqrt{T}}\right)\left(\sqrt{T}\,\bar\phi\right)\sqrt{d_r}\,c
=
\frac{\bar\vartheta\,\bar\theta\,\bar\phi}{\underline\xi}\,\sqrt{d_r}\,c.
\]

\paragraph{Bounding $|\mathbb E[R_{1s}]|$ and $|\mathbb E[R_{2s}]|$.}

We re-write $R_{1s}$ as
\begin{align*}
    R_{1s} &= \vartheta_s^\top ({\theta^{R}}^\top \theta^R)^{-1}{\theta^{R}}^\top \sum_{i=2}^{J+1} w_i \epsilon_i^R\\
    &= \sum_{i=2}^{J+1} {w_i} \vartheta_s^\top ({\theta^{R}}^\top \theta^R)^{-1} {\theta^{R}}^\top  \epsilon_i^R\\
    &= \sum_{i=2}^{J+1} {w_i} b_s^\top \epsilon_i^R,
\end{align*}
where $b_s={\theta^{R}}({\theta^{R}}^\top \theta^R)^{-1}\vartheta_s$. Notice that $\epsilon_i^R$ has independent mean-zero sub-Gaussian entries with variance proxy $\bar\sigma^2$.
Therefore, $b_s^\top \epsilon_i^R$ is sub-Gaussian with variance proxy $\bar\sigma^2\|b\|_2^2$. Regarding $\|b_s\|_2$, we have
\[
\|b_s\|_2
= \big\|{\theta^{R}}({\theta^{R}}^\top \theta^R)^{-1}\vartheta_s\big\|_2
\ \le\ \|{\theta^{R}}\|_{\mathrm{op}}\;\|({\theta^{R}}^\top \theta^R)^{-1}\|_{\mathrm{op}}\;\|\vartheta_s\|_2.
\]
We note that $\lambda_{\min}({\theta^{R}}^\top \theta^R)\ge T\underline\xi$, leading to $\|({\theta^{R}}^\top \theta^R)^{-1}\|_{\mathrm{op}}\le (T\underline\xi)^{-1}$. Moreover,  $\|\theta^{R}\|_{\mathrm{op}}\le \sqrt{\sum_{t=1}^T\|\theta_t\|_2^2}\le \sqrt{T}\,\max_{t}\|\theta_t\|_2$, with $\|\theta_t\|_2\le \bar\theta$, and $\|\vartheta_s\|_2\le \bar\vartheta$. Therefore,
\[
\|b_s\|_2 \ \le\ \frac{\sqrt{T}\,\bar\theta}{T\underline\xi}\,\bar\vartheta
= \frac{\bar\vartheta\,\bar\theta}{\underline\xi}\,\frac{1}{\sqrt{T}},
\]
and consequently, $b_s^\top \epsilon_i^R$ is sub-Gaussian with variance proxy $(\bar\sigma\frac{\bar\vartheta\,\bar\theta}{\underline\xi}\,\frac{1}{\sqrt{T}})^2$.
Consequently, applying Theorem 1.16 from \citep{rigollet2023high}, we have
\begin{align*}
    |\E[R_{1s}]|
&=\left|\E\left[   \sum_{i=2}^{J+1} {w_i} \{b_s^\top \epsilon_i^R\}  \right] \right|    \\
&\le\E\left[\max_{w \in \mathbf{W}}   \left|\sum_{i=2}^{J+1} {w_i} \{b_s^\top \epsilon_i^R\}  \right| \right]    \\
&\le \sqrt{2}\bar{\vartheta} \bar{\theta} \bar{\sigma}\underline\xi^{-1} \sqrt{\frac{\log(2J)}{T}},
\end{align*}
where $\mathbf{W}$ is the simplex
\[
\mathbf{W}:=\left\{w\in\mathbb R^J:~ w_i\ge 0,\ \sum_{i=2}^{J+1}w_i=1\right\}.
\]

Similarly, we re-write $R_{2s}$ as
\begin{align*}
    R_{2s} &= -\vartheta_s^\top ({\theta^{R}}^\top \theta^R)^{-1}{\theta^{R}}^\top\tilde\theta^R (\tilde\mu^F_1 - \sum_{i=2}^{J+1} w_i \tilde\mu^F_i)\\
    &= \sum_{i=2}^{J+1} {w_i} \vartheta_s^\top ({\theta^{R}}^\top \theta^R)^{-1} {\theta^{R}}^\top \tilde\theta^R  (\tilde\mu_i^F-\tilde\mu_1^F)\\
    &= \sum_{i=2}^{J+1} {w_i} \tilde b_s^\top (\tilde\mu_i^F-\tilde\mu_1^F),
\end{align*}
where $\tilde b_s=\tilde\theta^{R^\top}{\theta^{R}}(\theta^{R^\top} \theta^R)^{-1}\vartheta_s$.
Notice that $(\tilde\mu_i^F-\tilde\mu_1^F)$ has independent mean-zero sub-Gaussian entries with variance proxy $2\tau^2$.
Therefore, $\tilde b_s^\top (\tilde\mu_i^F-\tilde\mu_1^F)$ is sub-Gaussian with variance proxy $2\tau^2\|\tilde b\|_2^2$. Regarding $\|\tilde b_s\|_2$, we have
\[
\|\tilde b_s\|_2
= \big\|\tilde\theta^{R^\top}{\theta^{R}}(\theta^{R^\top} \theta^R)^{-1}\vartheta_s\big\|_2
\ \le\ 
\|{\tilde\theta^R}\|_{\mathrm{op}}\;
\|{\theta^{R}}\|_{\mathrm{op}}\;\|(\theta^{R^\top} \theta^R)^{-1}\|_{\mathrm{op}}\;\|\vartheta_s\|_2.
\]
We note that $\lambda_{\min}({\theta^{R}}^\top \theta^R)\ge T\underline\xi$, leading to $\|({\theta^{R}}^\top \theta^R)^{-1}\|_{\mathrm{op}}\le (T\underline\xi)^{-1}$. Moreover,  
$\|\theta^{R}\|_{\mathrm{op}}\le \sqrt{\sum_{t=1}^T\|\theta_t\|_2^2}\le \sqrt{T}\,\max_{t}\|\theta_t\|_2$, with $\|\theta_t\|_2\le \bar\theta$, 
$\|\tilde\theta^{R}\|_{\mathrm{op}}\le \sqrt{\sum_{t=1}^T\|\tilde\theta_t\|_2^2}\le \sqrt{T}\,\max_{t}\|\tilde\theta_t\|_2$, with $\|\tilde\theta_t\|_2\le \bar{\tilde\theta}$, 
and $\|\vartheta_s\|_2\le \bar\vartheta$. Therefore,
\[
\|\tilde b_s\|_2 \ \le\ \frac{\sqrt{T}\,\bar\theta\sqrt{T}\,\bar{\tilde\theta}}{T\underline\xi}\,\bar\vartheta
= \frac{\bar\vartheta\,\bar\theta\bar{\tilde\theta}}{\underline\xi}\,
\]
and consequently, $\tilde b_s^\top (\tilde\mu_i^F-\tilde\mu_1^F)$ is sub-Gaussian with variance proxy $2(\tau\frac{\bar\vartheta\,\bar\theta\bar{\tilde\theta}}{\underline\xi})^2$.
Consequently, applying Theorem 1.16 from \citep{rigollet2023high}, we have
\begin{align*}
    |\E[R_{2s}]|
&=\left|\E\left[   \sum_{i=2}^{J+1} {w_i} \{\tilde b_s^\top (\tilde\mu_i^F-\tilde\mu_1^F)\}  \right] \right|    \\
&\le\E\left[\max_{w \in \mathbf{W}}   \left|\sum_{i=2}^{J+1} {w_i} \{\tilde b_s^\top (\tilde\mu_i^F-\tilde\mu_1^F)\}  \right| \right]    \\
&\le 2\bar\vartheta\,\bar\theta\bar{\tilde\theta}\underline\xi^{-1} \sqrt{\log(2J)},
\end{align*}
where $\mathbf{W}$ is the simplex
\[
\mathbf{W}:=\left\{w\in\mathbb R^J:~ w_i\ge 0,\ \sum_{i=2}^{J+1}w_i=1\right\}.
\]

Then
\[
|\mathbb E[R_{1s}]|
\le \sqrt{2}\,\frac{\bar\vartheta\,\bar\theta}{\underline\xi}\,\bar\sigma
\sqrt{\frac{\log(2J)}{T}},
\qquad
|\mathbb E[R_{2s}]|
\le 2\,\frac{\bar\vartheta\,\bar\theta\,\bar{\tilde\theta}}{\underline\xi}\,\tau\,\sqrt{\log(2J)}.
\]

\paragraph{Conclusion.}
Combining these bounds with \eqref{eq:triangle_new} and averaging over $s=1,\ldots,S$,
\begin{align*}
\big|\mathbb E[\hat\psi^{sc}] - \psi_0\big|
&= \left| \frac{1}{S} \sum_{s=1}^S \E\left[Y_{1,s} - \sum_{i=2}^{J+1} w_i Y_{i,s} \right] - \frac{1}{S} \sum_{s=1}^S \E \left[Y_{1,s} - Y_{1,s}^{(0)}\right] \right| \\
&=
\left|\frac{1}{S}\sum_{s=1}^S \mathbb E\left[Y_{1,s}^{(0)} - \sum_{i=2}^{J+1} w_i Y_{i,s}\right]\right| \\
& \le \frac{1}{S}\sum_{s=1}^S
\left|\mathbb E\left[Y_{1,s}^{(0)} - \sum_{i=2}^{J+1} w_i Y_{i,s}\right]\right|\\
&\le
\Bigg(
\bar{\varphi}\sqrt{d_t}
+
\frac{\bar\vartheta\,\bar\theta}{\underline\xi}
+
\frac{\bar\vartheta\,\bar\theta\,\bar\phi}{\underline\xi}\sqrt{d_r}
\Bigg)c
+
\sqrt{2}\,\frac{\bar\vartheta\,\bar\theta}{\underline\xi}\,\bar\sigma
\sqrt{\frac{\log(2J)}{T}}
+
2\,\frac{\bar\vartheta\,\bar\theta\,\bar{\tilde\theta}}{\underline\xi}\,\tau\,\sqrt{\log(2J)}.
\end{align*}
This proves the theorem.

\qed

\end{document}